\documentclass[a4paper,12pt,reqno]{amsart}
\usepackage{amssymb, amscd,enumerate,enumitem,color,
mathtools}

\usepackage[british]{babel}
\usepackage[mathscr]{euscript}
\usepackage{verbatim}

\tolerance=180
\hyphenpenalty=50

\topmargin= 0mm
\headsep = 20pt
\footskip = 28pt
\oddsidemargin=5pt
\evensidemargin=5pt
\addtolength{\textwidth}{90pt}   
\addtolength{\textheight}{33pt}   

\parskip=4pt

\theoremstyle{plain}
\newtheorem{theo}{Theorem}[section]
\newtheorem{lem}[theo]{Lemma}
\newtheorem{prop}[theo]{Proposition}

\newtheorem{cor}[theo]{Corollary}

\newtheorem{lemma}[theo]{Lemma}
\newtheorem{theorem}[theo]{Theorem}

\theoremstyle{definition}

\newtheorem{definition}[theo]{Definition}


\newcommand{\beq}{\begin{equation}}
\newcommand{\eeq}{\end{equation}}
\renewcommand{\a}{\alpha}

\newcommand{\f}{\varphi}
\newcommand{\g}{\gamma}

\renewcommand{\l}{\lambda}
\renewcommand{\o}{\omega}

\renewcommand{\r}{\rho}
\newcommand{\s}{\sigma}
\renewcommand{\t}{\tau}
\newcommand\vp{\varphi}



\newcommand{\bC}{\mathbb{C}}

\newcommand{\bR}{\mathbb{R}}
\newcommand{\bZ}{\mathbb{Z}}

\newcommand{\bP}{\mathbb{P}}

\newcommand{\bN}{\mathbb{N}}


\renewcommand{\gg}{\mathfrak{g}}

\newcommand\SL{\mathrm{SL}}
\newcommand\SO{\mathrm{SO}}



\newcommand{\cD}{\mathscr{D}}

\newcommand{\cF}{\mathscr{F}}
\newcommand{\cG}{\mathcal{G}}

\newcommand{\cN}{\mathscr{N}}

\newcommand{\cQ}{\mathscr{Q}}

\newcommand{\cS}{\mathscr{S}}

\newcommand{\cU}{\mathscr{U}}
\newcommand{\cV}{\mathscr{V}}

\newcommand{\Sim}{\operatorname{Sim}}

\newcommand{\CO}{\operatorname{CO}}

\newcommand{\Gabor}{{\mathcal G}\!\!\operatorname{\it ab}}
\newcommand{\ste}{\operatorname{st}}
\renewcommand{\=}{{:=}}


\newcommand{\p}{\partial}

\renewcommand{\square}{\kern1pt\vbox
{\hrule height 0.6pt\hbox{\vrule width 0.6pt\hskip 3pt
\vbox{\vskip 6pt}\hskip 3pt\vrule width 0.6pt}\hrule height0.6pt}\kern1pt}

\DeclareMathOperator\vol{vol}
\DeclareMathOperator\Id{Id}

\renewcommand\Re{\operatorname{Re}}
\renewcommand\Im{\operatorname{Im}}
\renewcommand\={:=}

\newcommand{\wt}{\widetilde}
\newcommand{\wh}{\widehat}

\newcommand{\bt}{\begin{theo}\ \ }
\newcommand{\et}{\end{theo}}
\newcommand{\bp}{\begin{prop}\ \ }
\newcommand{\ep}{\end{prop}}
\newcommand{\bc}{\begin{cor}\ \ }
\newcommand{\ec}{\end{cor}}
\newcommand{\bl}{\begin{lem}\ \ }
\newcommand{\el}{\end{lem}}
\newcommand{\bd}{\begin{definition}}
\newcommand{\ed}{\end{definition}}

\newcommand{\be}{\begin{equation}}
\newcommand{\ee}{\end{equation}}

\def\<#1,#2>{\langle\,#1,\,#2\,\rangle}

\newcommand{\arr}{\begin{array}{rlll}}
\newcommand{\ea}{\end{array}}
\newcommand{\bea}{\begin{eqnarray}}
\newcommand{\eea}{\end{eqnarray}}
\newcommand{\bean}{\begin{eqnarray*}}
\newcommand{\eean}{\end{eqnarray*}}




\newcommand{\Np}{{\bf N}}
\newcommand{\Sp}{{\bf S}}
\newcommand{\ster}{\operatorname{st}}

\hyphenation{par-am-etri-sa-tion}
\hyphenation{equi-vari-ant}
\hyphenation{holo-no-my}
\hyphenation{mat-rices}
\def\sideremark#1{\ifvmode\leavevmode\fi\vadjust{
\vbox to0pt{\hbox to 0pt{\hskip\hsize\hskip1em
\vbox{\hsize3cm\tiny\raggedright\pretolerance10000
\noindent #1\hfill}\hss}\vbox to8pt{\vfil}\vss}}}

\title[The functional architecture of the early vision]{The functional architecture\\ of the early vision and
neurogeometric models}
 \author[D. V. Alekseevsky and A. Spiro]{Dmitri V. Alekseevsky and  Andrea Spiro}
\begin{document}
\begin{abstract} The initial sections of the paper give a concise presentation, specially designed for a mathematically oriented audience, of some of  the most basic facts on the functional architecture of early  vision. Such  information  is usually scattered in a variety of  papers and books, which  are  not easily accessible  by  non-specialists. Our  goal is thus to offer a handy and  short introduction to  this topics, which might be helpful for researchers willing to enter the area of the applications of modern Differential Geometry in studies on the visual systems, baptized {\it neurogeometry} by J. Petitot. We then offer a survey of three of the most important neurogeometric  models: Petitot's contact model  of  the   primary visual cortex, its extension to  A. Sarti, G. Citti and J. Petitot's symplectic model,  and  P. C. Bressloff and J. D. Cowan's spherical model of  hypercolumns. We finally discuss the  main points of the so-called  ``conformal  model''  for hypercolumns (a model that  was briefly presented in [D.\ V.\ Alekseevsky, {\it Conformal model of hypercolumns in {V}1 cortex and the {M}\"{o}bius group}, in ``Geometric science of information'', pp. 65--72,   Springer, 2021] and  given  in  detail in  [D.\ V.\  Alekseevsky and A.\ Spiro, {\it Conformal models for    hypercolumns  in the primary visual cortex V1},  arXiv 2024]), which can be considered as a synthesis of the symplectic and the spherical  models.
\end{abstract}

\subjclass[2000]{92B99, 68T45 } 
\keywords{Organisation of the visual system; Neurogeometry of early vision, Columns and hypercolumns of  V1 cortex;  
Donder's Law; Listing Law; Saccades; Gabor filters; Conformal Geometry of the  sphere; Hoffman's model}

\maketitle

\tableofcontents

\section*{Introduction}

\label{intro}
\par

The  human visual system is a  highly complicated,   hierarchically organised system, 
   consisting   of several parts,  the   eyes,  the LGN,  the primary  visual cortex  V1,  the cortices V2,  V3, V4    etc., all  of them related one another  with  strong  feedback. 
Some of the most   fundamental  results  on  the  functional structure of the V1 cortex are due  to D. Hubel    and T. Wiesel, who  
 put forward several  crucial ideas, as for instance   the  distinction between  simple and complex  visual neurons, the  discovery of   the
columnar  structure   of the  V1 cortex  and, in particular,   of the   singular  columns (i.e. the  {\it pinwheels}), etc. They also   introduced   the  notions of  
{\it internal  parameters} and  of   {\it hypercolumns}  and  promoted the crucial idea   that  the  V1
cortex  might   be  mathematically represented  as a  fiber  bundle   over the  retina $R$.    This can be taken as one of the main motivations for the recent activity    in applications    
of differential geometry in the   constructions  of mathematical      models for the  early visual system.\par\
In this paper we would like to   offer  a brief   (and, surely not exhaustive) survey of   a diverse   facts, conjectures and problems, concerning  the visual system,   
with a special attention to the aspects that  have been  (or  that might be  in the future)   be analysed   using  differential geometric tools.  After such preliminary review, we give a short introduction to   {\it neurogeometry},    a  recent  
research area in  applied mathematics,   initiated by  J. Petitot and his coworkers (see e.g. \cite{P-T, P, S-C-P}), that is meant to produce 
   {\it continuous}   models for 
 different  brain  subsystems -- in particular  to the visual system -- using   method and results from  Differential Geometry, Lie
 Groups, Differential  Equations  and  Statistics.\par
\smallskip
In  more detail, in the  first  sections of this paper we  begin recalling   some  basic   facts about 
the functional architecture of the early vision  in both the static setting (that is, with  still eyes and still stimuli) and in dynamics.
We also give a short review of  some  common mathematical representation   of the (linear)oculomotor
visual neurons as filters, providing      differential geometric interpretations. We then address a few  mathematical aspects of
the interactions, occurring  in the process of the visual perception,  between  the oculomotor information  on the eye position and  the information coming from 
the photoreceptors  in the  retina.  In particular,  we  use  Donders' and Listing's laws in order to analyse    the  configuration  space of  the  eye   and
 consider  some of the most    basic  properties   of   the  saccades  and   fixational eye movements,  which  lead to the 
 perception  of  stable   objects. 
 We then discuss  the phenomenon of   the pre-saccadic  shifts of receptive  fields  and  of the   remapping process together with  the problem of identification between two different
   retinal images  after  a remapping.   On this regard, we present a conjecture according to which  such an identification can be  mathematically represented in terms of an appropriate  conformal transformation of the  eye
 sphere and  we show  that such a conjecture  is   a realisation of  the    Etcetera Principle of E. Gombrich. Consequences of all  this  concerning the 
 visual  stability problem are  also  discussed.\par
\smallskip
As we mentioned above, in the last    two sections, our discussion focuses on    some  important differential geometric  models for  the  primary  visual cortex V1 and  for  the
hypercolumns, namely: (a)   the  pioneer   model by  W. Hoffman,
(b)  the contact model introduced by Petitot and Tondut, (c) its  symplectic extension  by Sarti, Citti and Petitot  and  (d)     Bressloff and
Cowan's   spherical  model   of a  hypercolumn.\par
  All these  models    are basically founded on a  general idea  by Hubel and Wiesel, namely 
 that the  firing of a  simple  neuron    with  a receptive  field represented by   a point  $z$ of the retina  $R$,  depends  not just on the  {\it value}   at $z$ of the   input density of energy    of   the incident light,   but   also  
  on several other data of the  local properties    of  such  input  function
  near $z$ (as, for instance,   the {\it orientation} and the  {\it  spatial frequency}).
 The first  who  transposed     this idea  into   differential geometric  terms     were W. C. Hoffman and J. Petitot,   who  mathematically represented
     the  early visual system     in terms of  a     fiber bundle
     $\pi: P \to R$
       over  the  retina $R$.  In such representations, 
    each    point $z$ of the retina $R$  (which is ideally considered  as  a  surface) is   a mathematical representation for  the receptive field (RF) of a column. Indeed, we recall that 
   the RF of a columns is the union of the RFs of the  neurons which  are  contained in it and  that, even if it   is slightly larger than each single RF of  its  neurons, in  first approximation
   it can be considered as a  single point of the retina.  At the same time, the points of each    fiber $\pi^{-1}(z) \subset P$ of a RF $z$  are  mathematical representations of the visual  neurons that are in   the  column with  such RF.   in  the mathematical models of kind,  only simple   neurons are considered.  This   is  motivated by the fact that the first information   coming  from the  retina is  mostly transmitted just  to   simple neurons. \par
   According to this kind of differential geometric models  the coordinates 
 $\theta_1, \cdots. \theta_k$  of the  fibers of a principal bundle   correspond to Hubel and Wiesel's
 ``{\it internal parameters}''  and are  related    with the local   properties of   the   input  energy  function at their  base points $z \in R$, 
      as,  for instance, the orientation,  the  modulus of the gradient, the coordinate in the  color space, etc. As we already mentioned, 
a fixed   simple neuron  $n_{\theta^o_1, \cdots, \theta^o_k}$ of a column with RF $z$  is   mathematically represented  by  the    point  $(\theta^o_1, \cdots, \theta^o_k)$ 
    of a fiber $\pi^{-1}(z)$ and it  is assumed to  fire  only if   the
  restriction of the  input  function  to  its  RF  $z$  has  the local   properties, corresponding to  the   internal parameters $(\theta^o_1, \cdots,
  \theta^o_k)$. \par
  A complete  list  of  the  internal  parameters, which are relevant for the visual system -- and thus the exact number of  fiber coordinates  for the most appropriate  differential geometric model of the form $\pi: P \to R$ for the V1 cortex -- 
is not known. The    most  important internal parameters are surely the  {\it orientation} and the {\it spatial
frequency}, but there are  many other relevant parameters, as for instance those for  the color space,
 the contrast, the curvature, the temporal frequency, the ocular dominance, the disparity and the
direction of the motion.    N. V. 
 Swindale \cite{S}  estimated the  total number of  internal parameters  as  $6-7$  or  $9-10$.\par

Another   important  and deep idea  by Hubel and Wiesel  is that {\it the columns are locally grouped into hypercolumns}, i.e. sets of columns  characterised as follows: 
{\it A hypercolumn $H$ is  a  minimal collection  of   columns  of the V1  cortex  with 
 different reactions  for each of the  possible   values for   the     internal parameters}.  The basic idea   is  that a hypercolumn, or, more
   generally,   a   system of horizontally connected   hypercolumns,  is a collection of cells  that is  responsible  for   the  full perception     of the
   local  structure of a retina image.  On this regard, we have to  mention  that,  contrary to  the columns, the existence of hypercolumns has been debated
for a long time, see \cite{T-Z-B}.
\par
The  first  differential-geometric models of the V1 cortex, proposed  in the papers by  Hoffman,  Petitot   and Tondut, and Petitot,  can be  briefly described as follows. 
According to Hubel and Wiesel's results, the simple neurons of the V1 cortex detect the 
{\it contours}, i.e  the  level sets  of the input  energy function I on the retina R
(that is,  the energy density  of the light that hits the retina) with large gradient. Starting from  this idea, 
Hoffman  was  the first who presented a pioneering model   (with a few mathematical inaccuracies) of the
Visual Cortex in terms of  a contact bundle [20]. 
Inspired by   Hubel,   Wiesel  and Hoffmann's ideas and motivated by the new  experimental discoveries concerning  the structures of the orientation maps
(results that were made possible by  revolutionary Bonh\"offer and Grinvald's techniques of the early
nineties),   J. Petitot and his collaborator Y. Tondut  \cite{P, P1, P-T}  developed  in great detail the so-called  {\it contact model} for the V1 cortex.  
In  such a  model,   the retina $R$,  is  identified with the Euclidean
plane $R = \bR^2$ and the V1 cortex is   represented  as  the projectivised
tangent bundle $\pi : P T R\to R$ of $R = \bR^2$. This bundle admits a natural system of coordinates $(x, y, \theta)$, in which
$(x, y)$
are coordinates for the points of the retina $ R $   and $\theta  \in  [-\pi/2, \pi/2]$
 is the  {\it orientation} of the lines passing through $(x,y)$, i.e. the angle made by  the line  and the
axis $0x$. We remark that $ P T R$ can be also interpreted as the space of the infinitesimal
curves (or, more precisely, the 1-jets of the non-parameterised curves) in $ R = \bR^2$ and  it is
equipped with the canonical contact 1-form  $\eta =dy  - \tan{\theta dx}$.\par
     Petitot's  contact  model 
     was later  combined with
       G. Citti and A. Sarti's  model of  perceptual completions of images  \cite{C-S0}
 to        determine an extended differential-geometric  model of the V1 cortex, which we call  {\it Sarti, Citti  and Petitot's symplectic model}. In this new model the simple cells of the V1 cortex are described in terms  of the points of  a bundle with  two-dimensional  fibers. 
 The two coordinates $(\theta, \s)$ of the fibers (= the internal parameters   considered in this model)  are  the previously defined orientation  $\theta \in  [-\frac{\pi}{2}, \frac{\pi}{2})$  and a  new parameter, the    so-called {\it scaling  factor}\ $\sigma$, which  describes  the intensity   of    response of a neuron to a  stimulus. 
      In \cite{S-C-P}  the  authors propose an interpretation of such scaling factor $\s$   in terms of the distance between the    RF  of  a neuron  (which is assumed to be activated through the  so-called {\it maximal selectivity process})   and the regular  boundary of a  retinal figure.  
      Note    that such interpretation of $\s$  has a non-local character and does not allow  to  consider   it    as an internal parameters in the  sense of Hubel and Wiesel. \par
Aiming to determine an alternative (and purely local) interpretation of the scaling factor,     in \cite{A} the first author proposed the {\it conformal spherical model for hypercolumns}, which is  based on  ideas of P. Bressloff and J. Cowan's theory of hypercolumns  and leads  to an interpretation  of $\s$  in terms of the   {\it normalised   spatial frequency},    another     internal parameter that is    equally fundamental as    the   orientation.  The conformal spherical model   has been later developed in full detail in \cite{A-S}. In that paper, it   is shown    that,  in   small neighbourhoods   of  pinwheels,   the  conformal spherical   model   reduces    to  a reduced
model,   which is  mathematically  identical to     Sarti, Citti and Petitot’s  symplectic  model.  In the last  section of this paper,  we discuss some  important features of  such  a reduced   model and  its  relation  with  the  symplectic model.
\par
 This paper ends with  a short    discussion of     differences between the simple and  the complex neurons.  It is known that each complex neuron of the V1 cortex   collects information  from  systems  of  several simple neurons.  It is also known that    the  simple  neurons  are sensible   to  the  shifts of the  contours    in  their receptive fields,  while  the  complex  neurons  are  not (see  e.g. \cite{H,H-H,Car}).   In our  last section, we   state the  {\it Principle of Invariance} and discuss a possible use of this principle to explain such a fundamental difference.
    \par
   \smallskip    
The  paper is organised as  follows.
In \S \ref{sec:1}, we  offer an outline  of  some  of  the  most  important known  facts on  the    visual system   and 
 the  functional  architecture  of  the primary  visual
cortex V1. Many topics very briefly mentioned in this preliminary section  will be  discussed in greater detail in the subsequent sections.
Models of  the   visual neurons   as filters and some of  their  geometrical interpretation are  discussed in
 \S \ref{gaborsect}.  In \S \ref{sect:3}   we provide a   description of  the  functional  architecture  of  the primary  visual
cortex V1, based on the  fundamental ideas  of Hubel and  Wiesel.
Section \S \ref{sect 4}  is devoted  to   the  geometry of  the  eye movements  (i.e. to the {\it fixational eye movements} and  the {\it saccades}) and to a discussion  of
Donders' and Listing's laws. 
A short introduction to  conformal geometry of the  sphere   is  given in  \S  \ref{section 5}.
 The information processing in  dynamics  is the topic of \S  \ref{sect:6}, where   we  discuss  the  shift of  receptive  fields, the 
 remapping  phenomenon  and  the problem of identification of  retina images before  and  after  the remapping.
   In that section, we state the conjecture that the remapping is determined by a conformal transformation.  
   We then show that  such a conjecture gives   a  realisation  of   the
  Etcetera Principle by E. Gombrich and we discuss   its   consequences for    the Alhazen Visual Stability Problem. 
In  \S \ref{section 7} we provide  a short  exposition of the   contact   models of  the V1 cortex   by Hoffman,   
Petitot  and Tondut, and Petitot of  the   symplectic model of  Sarti, Citti and Petitot.   The concluding section \S \ref{section 8}   is devoted to models of   hypercolumns. 
More precisely,  we  describe  the geometric structure of  Bressloff and Cowan's spherical model,  our    modification of such a  model in  the framework of   conformal geometry  and   a  reduced version    of  such modified model  for    neighbourhoods  of pinwheels.  We then  discuss 
  the  relations         between     Sarti,
Citti and Petitot’s symplectic model,  Bressloff and Cowan’s spherical model and our   conformal model.
 In the  concluding subsection \S \ref{subsection
8.4.}, we state the {\it Principle of Invariance}
and  use  it    for  an explanation      of  the  differences  between simple  and   complex cells.
\par
\medskip
\noindent{\it Acknowledgements.}  We  warmly  thank  J. Petitot  for  valuable comments  and
useful advises and suggestions that helped us to  improve the  overall presentation.

   \section{General  principles  of  the organisation of  the visual  system}
  \label{sec:1}
  \par
In this section,  we list the main   assumptions that we  adopt throughout  the paper and  we  outline    some  well  known facts on  the visual  system, with a particular attention to  geometric  aspects.  Many of the topics, that  are just briefly  touched here,  will  be further discussed  in the next sections.  As for any  survey  of a   very wide area, we are  aware that our exposition has little   chance   to be   fully comprehensive.  
But we  need to  stress  that, as we mentioned above,    our   goal is essentially   to provide  a concise and handy  overview    of the  most 
 important       facts  on the   functional architecture of  early  vision, which can be  helpful for   mathematician, who are interested in neurogeometry and are  not a specialists  in the physiology of vision.   For additional information  on the topics outlined in the following,  the reader is  referred to the   textbooks, reviews and articles on the visual  system  \cite{H,B-T-T,Ko,J-G-H-W,K-B,K,Marr,P, P1, C-S} and
the many references therein.\\[5 pt]
 1.  Our  discussion  is limited  to the   {\it monocular  vision}  given  by gray levels  (no  color).\\[5pt]
 2. In the first part  of this paper,  we are going to  limit the discussion to  the  {\it static} situation  (i.e. when the  eye  and the  stimuli  are    still).
 In this setting,   it can be assumed that  the brain extracts all visual information just from the  {\it retinal input   function $I_R$},   i.e. the density of the energy of the
light which is incident to the retina $R$.
  Such  a  function   is encoded into the excitation of
the photoreceptors (rods and cones).    
Note however that in the  natural setting,   the  eyes  are never still,  they  are 
  continuously  moving    and  it is experimentally proved that,  if compensations of   the 
  eye movements are made, then  a  loss of vision occurs  within $2$-$3$ seconds \cite{Y}.
In the {\it dynamics} setting,  the input energy  function $I_R$  is a function  not only of the points $z \in R$ of the retina 
but also  of the time $t$,  and the vision is the result of an interaction between the stochastic
information about the dynamics of the retinal image, encoded in the photoreceptors,
and the information about the eye position, given by the oculomotor control system.
  Copies of the oculomotor commands on the eyes movements ({\it efferent copies}
or
{\it corollary discharges}) meet the  information coming  from the retina in some region of the cerebral cortex, which is probably the {\it medial superior
temporal (MST)} area  \cite{C-M-W}.
\\[5pt]
3.  We   assume that    the head is  fixed   and we  consider  the   eye ball    as  a rigid  body, which may  rotate
around  its center $O$. The   retina $R$  is considered as   a very large  domain of  the eye  sphere  ( = the  boundary of the
eye  ball). Sometimes  it is convenient  to identify  the  retina  with  the  whole eye sphere.  We  also assume  that the
optical center of  the  eye, i.e.  the {\it nodal point    $\cN$},    is located  on  the  eye  sphere. In  reality, $\cN$
is an  inner  point of the eye  ball,  but  it is very  close  to the boundary.\\[5pt]
4. The  retinal image  of an  external  surface $\cS \subset  \bR^3$ is  obtained by    the { \it    central projection
with respect
to $\cN$} of  the surface $\cS$   onto the eye sphere. Such a projection  is the map $\varphi_{\cN}: \cS \to R$,   which  sends each
point  $A$ of  $\cS$   into the   point $\overline{A}$ of  $R$,  which is determined  by   the  intersection between the   ray
  $\ell_{A\cN}$,  originating from   $A$ and passing through $\cN$,  and  the
retina $R$:
 \begin{equation} \label{central}
   \cS \ni  A \overset \f  \longmapsto 
    \overline{ A}  \=
 \ell_{A\cN}\cap R   .
 \end{equation}
  Under the assumption that each  $A \in \cS$ is the source of an ideal  diffused reflected light,  the 
  corresponding  $ \overline{A}$ is the point in the retina that  receives the light emitted by  $ A$ with   an
  intensity  that depends  on the energy density  at the emission point.\\[5pt]
 5.  We assume that the  {\it center of  the  fovea} is   a  point $\mathcal F$  of  the  eye  sphere,    opposite
 to $\cN$. \\[5pt]
 6.   We  denote by $S^2_{\text{eye}}$ the eye sphere  in   primary position and by $(x,y)$ the Euclidean   coordinates
 of  the  tangent  plane $T_{\mathcal F} S^2_{\text{eye}} $  of  $S^2_{\text{eye}}$  at the center  of the fovea $\cF$.
  The   stereographic  projection with respect to the pole
   ${\cN }$
 $$ \ste_\cN:  S^2_{\text{eye}} \longrightarrow T_\cF S^2_{\text{eye}}  = \bR^2, \qquad    p \longmapsto  \ste_\cN(p) :=
 \ell_{\cN p}\cap T_\cF S^2_{\text{eye}} $$
  allows to consider $(x,y)$  as {\it standard  conformal coordinates}  for the retina $R \subset S^2_{\text{eye}}$.  In
  these coordinates,   the    metric of  the eye sphere $ S^2_{\text{eye}} $,  which is given   by its  embedding  into the physical Euclidean
  space  $E^3 = \bR^3$,
    has  the  form
    $$g =  f(x,y) (dx^2 +  dy^2)$$
      for some function  $f(x,y)$. In  a  small neighbourhood of   $\cF$, the   function $f(x, y)$ is
      approximately constant  and   equal to $1$,
    and   $(x,y)$  are   approximately Euclidean  coordinates  for  $g$. Under   these  approximations,
    the  retina  $R \subset S^2_{\text{eye}}$  can be (locally) identified with the Euclidean plane   $\bR^2$.
    \\[5pt]
8.  The visual   system    has a  hierarchical structure   with  a strong   feedback.
    The  initial  input   function  $I_R$,   which is recorded  by the photoreceptors (cones  and  rods),  is   very irregular.
    The  purpose of the   information  processing in the   retina   is to   regularise and contourise  the
    input  function $I_R$   and  prepare it for  decoding. The output of this process is  a regularised   function  $ I: R
    \simeq \bR^2 \to \bR$,   which  is   the {\it  input (energy) function} (for the visual cortex V1).  It  is  encoded  in  terms of   the excitations of the 
    ganglion cells whose   long  axons     terminate  in   the  {\it LGN}   ({\it Lateral Geniculate  Nucleus}).
      The visual information    is  first sent   from  the retina    to  the LGN  through  the  axons,  second it  is sent  to the
      {\it primary visual  cortex V1}, and  then   to  the  regions  {\it V2}, {\it V3}, etc.,  i.e.  to all other regions of the
      visual  system.   The  transformations which correspond to the changes  from   the input  functions $I_R$  on the retina   into  the
      corresponding  functions $I$ that are the input for   the
                 LGN and  the  V1 cortex, 
               can be   mathematically  described  (in first approximation) as    {\it conformal maps}  with respect to appropriate natural metrics, see \cite{Sch}.
                These transformations are  called    {\it retinotopic}  (or {\it topographic})  {\it mappings}.\\[5pt]
  9.
  The visual information of  the V1 cortex  is encoded  in  the firings  of  the {\it visual neurons}.  Each visual neuron  works   as  a  filter, i.e. as a functional on the  space  of  all possible  input  functions. In static,
    the   firing  of  a visual  neuron depends only on the restriction $I\vert_D$ of the input function $I: R \to \bR$ to
    some small domain $D \subset R$, which is called  the {\it receptive  field  (RF)}  of  the neuron. In dynamics,  the situation is much more complicate.  We  will discuss it in 
\S \ref{sect:6}.\\[5pt]
 10.   {\it  Simple and  complex neurons.} Hubel and  Wiesel   divide   the   neurons    into {\it
 simple}  and  {\it complex}.  A {\it simple neuron}  (also called  {\it simple cell})    works  as a linear  filter, i.e.
 as  a  linear  functional  on the  space of the  input  functions, whose output   is   the
 average    of  the    restriction  $I_D$ of the input  function $I$ to the receptive field   $D$ of the neuron,  in which  the mean value is computed weighting each point  $z \in D$   by the  value $W(z)$  of  an appropriate weight function $W: D \to \bR$, called   {\it receptive  profile  (RP)} of
 the  neuron.  The  receptive profiles of  the simple neurons are  well approximated by   the so-called  {\it Gabor  functions},  i.e.  Gauss functions  modulated
 by  a  sine or a  cosine function (see \S\ref{gaborsect} for details).\par
    The other neurons  are  the     {\it  complex neurons} (or {\it complex cells}). They collect
    information   from   systems  of  several  visual neurons and usually    work as   non-linear functionals.
Several  models for the  complex cells have been   proposed --    one of the  first   models was proposed by   J.\ A.\ Movshon,
I.\ D.\ Thompson and D.\ J.\ Tolhurst \cite{M-T-T} (see also \cite{Car,Car1}). Following an idea by Hubel and Wiesel,
they  describe  a complex cell as a  non-linear filter  for the   information, which is determined  by a system of
(linear)  simple cells with  identical  orientation  but  different receptive  profiles.
    According to this model,  the  processing  performed  by a  complex cell consists of  a
    rectification and a combination  of   the information,  which is collected by  the  simple cells to which it is   connected. A  crucial
    difference between the complex cells and the  simple cells is given by the  fact that  the  reaction of a
    complex neuron to a contour  is invariant under  shifts of that contour  within the receptive field of the neuron.  Till now   this  model  is in a good
    correspondence  with experiments (see \cite{Car}).\par
The  main  limitation of  these  models of complex and simple cells   is that they are static, that is they do not take  into account  the time
evolution of the incoming information.
In dynamics the  situation  becomes much more complicated  and  the  non-linear  neurons (filters)    play  a more
important  role. In this paper  we   are mostly concerned with   the (linear) simple  neurons.  Complex neurons will be  briefly discussed  only     in the last
section.\\[5pt]
11. {\it Processing  of the visual information in the  retina}.
  Before  going     from the  retina  to  the  visual cortex,  a processing of the visual  information  occurs  {\it within the   retina}.  We  do not    discuss in  detail   the    very complicated    system of processing  information in  retina -- for this topic with  refer   to the  excellent   survey \cite{P-B-B}. Here we just mention that, roughly speaking,  the retinal  processing of  the   visual  retinal information, encoded in the   photoreceptors (codes and  rodes),  consists in   regularisations and  contourisations of   the  stimuli.   The first  physiologist who detected    a response on a retinal neuron  and  proposed   a model  for   such actions   was S.  W. Kufler. We will discuss  his  model  in \S \ref{Kuffler}.\\[5pt]
12.  As we mentioned above, after the initial retinal processing, most  of the   retinal information    goes     to    the    LGN (the  {\it Lateral Geniculate  Nucleus}), which is  a  part of the Thalamus.  The LGN does not  perform a further processing of  the information coming  from the retina, but it  supplies   additional instructions  or changes for  some of the   parameters, which are needed   for subsequent manipulations,  and  distributes the updated information  among  several other    subsystems  for    processing. Most of the information  is sent to simple cells of the primary cortex V1. But some  is sent directly   to the higher visual  subsystems V2 and V3.
Another very  important purpose    of the LGN is   to provide  feed-back from the higher visual systems to the primary information processing system, primarily to  the V1 cortex. 
%
%
%
%
           \\[5pt]
 13. 
 There  are three   pathways   from the  retina  to the  primary visual cortex  V1   through the LGN: The {\it P-pathway}, which   is mostly
 responsible   for  the perception  of   stable  objects,   the  {\it M-pathway}, which  is important  for  the perception
 of  moving objects,   and the {\it K-pathway}.    The  function of  the K-pathway is not know properly, but it is  known that it is related  with the perception of colors. 
 In this paper we    consider  only the  {\it P}-pathway.
    The  structure of  the other pathways    is more involved: For instance the {\it M}-neurons (which are responsible for   the {\it
    M}-pathway)  are  non-linear,  even those on  the retina \cite{K-B}.\\[5pt]
14. The energy   function  $I: R \to \bR$ on  the retina  $R$ is   completely determined  by its values  and   the
  level  sets $  L_c= \{z \in R\ : \ I(z) = c \}$ of the assumed values.
     {\it Only the level sets, not the     values,   are    truly  relevant}  (just think about the fact that
    when the light in the room is turned on, the illumination of the retina changes dozens of times).
     This fact is  consistent   with  Hubel  and  Wiesel's discovery  that   {\it the main  objects  detected  in  the
     early  vision are the {\rm  contours}, that is  the (non-parametrised) level  curves $L_c = \{I = c\}$  of  the  energy function  $I$
     with large gradients}.   Note  that, mathematically,  a contour  is an   integral curve
       of  the 1-dimensional distribution,   given by the   Pfaffian system $    dI = 0$,   and it  depends just  on the
       conformal  class $[\omega]$ of the 1-form $\omega = dI$. \\[5pt]
 15.  A contour $L$ though   a point of the retina $z \in R$  is locally  approximated  by     its   tangent  line $ \ell=
 T_z L  \subset T_zR$ at  $z=  (x,y)$ or,   more precisely,  by a small interval $b \subset \ell$ of this line, called {\it bar}.
 The  line $\ell$  and  the bar $b$   are determined  by    their {\it  orientation}  $\theta \in [-\pi/2, \pi/2)$, i.e.
 the  angle  formed by  $ \ell$ and   the     $x$-axis.
    The   space  of  the  {\it infinitesimal  contours} (also called {\it orientations}) is  the space   of  all  tangent lines to
    non-parametrised curves of  $R$, i.e.   the
              projectivised   tangent   bundle  $PTR$ of $R \simeq \bR^2$.  We recall that  an open dense subset of $PTR$ can be   locally identified  with
              the  space of  $1$-jets
              $J^1(\bR, \bR) = \big\{(x,y, \frac{dy}{dx})\big\}$  of  the functions $y = y(x)$ of the  real line.\\[5pt]
  16. The  {\it orientation} $\theta $ of the tangent line of a contour is  a very   important (but  not unique) internal parameter for  the local
  structure of  the input energy function $I$, the main stimulus in  the  early vision. Another fundamental internal parameter is  the   {\it spatial  frequency} 
  \cite{C-R,S-N-R, D}:  Roughly  speaking, it  is a measure of how often the  sinusoidal components of the
  stimulus (given by its  Fourier transform) repeat per unit of length.
   It is  measured by   number of  cycles  per degree,  $c/\operatorname{deg}$ (\footnote{ The  {\it degree}   is the
the standard  length measure for  the  eye sphere.}).  It is  not an infinitesimal  characteristic    of  a  contour $L$,
   but
   of the   structure of  the image in a neighbourhood  of the  contour.  More precisely,  {\it it is a datum which is
   characteristic of  the   $1$-dimensional distribution $\cD = \ker d I $ near  a  contour}. The Fourier  analysis
   allows to  approximate   the  distribution $\cD = \ker d I $  by means of  a {\it sinusoidal  grating}. Such a   grating is
   determined  by  $4$ parameters:   {\it spatial  frequency}, {\it contrast}, {\it orientation}  and {\it phase}.
   All  these parameters   may be  considered as internal  parameters and they    give  information on the local properties 
 of  the image. In this paper,  following  Bressloff and Cowan,  we   focus  on   the  orientation
   and the spatial  frequency.  On this topic, we would like to mention that J. G. Robson   et al.  \cite{S-N-R} showed   that  there  are many  independent channels
   for the spatial frequency, but   only a few of them correlates. For example,   the channel with spatial frequency  $p =
   14\, c/\operatorname{deg}$   correlates     with a  channel  with a  frequency $p'$ if    $p/p'$ is  $4/5$ or $5/4$.
   \\[5 pt]
      17. {\it Hyper-specialisation   of  visual neurons.}
A  visual  neuron   fires  only  when, in its receptive field, 
    the local  internal parameters  of the stimulus (as e.g.  its orientation, spatial frequency. contrast,  etc.)   take (up to   small variations)
      some   prescribed  values,   which are uniquely associated with  the neuron.
      Actually, the   firing  of many  types of  visual neurons  depend  not just  on  such  prescribed values, but  also on  their rate of change. This is the reason why, in order to  perceive a stable
      object,   the eye must constantly move.
      \par
      \medskip
     \section{Models of linear  visual neurons } \label{gaborsect}
 A fundamental  mathematical model for the visual neurons is provided by the following notion.  By {\it  linear neuron with receptive
 field (RF) $D \subset R \simeq \bR^2$}   we mean   a  neuron, which  works  as a linear filter $T_W$ on the input energy function $I(z)$
 of  the  form
 $$I  \overset{T_W}\longmapsto \int_D W(z)I(z) \vol\ ,\qquad   z=(x,y)\ ,$$
 where we denote  by  $\vol :=  dx\,dy$ and  by $W(z)$   a  {\it weight function}  (in  neurophysiology, it   is called  {\it receptive
 profile (RP)})  which  characterises the filter $T_W$.
In mathematical language, $T_W$  is a linear   functional,  determined  by    the  weight  function $W$ with support $D$.
One  may say that    it    calculates a  sort  of  ``mean value'' of the restriction
$I\vert_D$ of the input function to the receptive field $D \subset R$,  where  each   point   $z$  is counted
with  the  weight    $W(z)$.
  \par
     We recall  that  a transformation
      $z \to z^\prime= \varphi(z)$ with positive Jacobian  $J(\varphi) =  \det\big\vert\frac{\partial z^\prime}{\partial
      z}\big\vert
      $,
      changes  the   coordinate   system $z=(x,y)$  into  the new  coordinate  system
      $$ z'(z)=(x', y') = (x \circ \varphi^{-1}, y \circ \varphi^{-1}) \ .$$
  Therefore, under such transformation, each  receptive profile $W(z)$ changes into
 \begin{equation} \label{density}    \varphi(W)(z') :=  \big(J(\vp)\vert_{\vp^{-1}(z')}\big)^{-1}\,W(\vp^{-1}(z'))  \
 ,\end{equation}
 i.e.  {\it the RPs  transform as   densities}.\par
 In the following subsections, we discuss several types of  linear filters that are particular relevant for vision.
\par
\medskip
\subsection{Gauss filters on  the plane}
The {\it Gauss    filters}  are   the linear filters with RP   given by the  Gauss  probability   distributions of  $\bR^2$,
that is  the functions defined as follows   (see also   \S \ref{sect3.5}).
 Let us call   {\it mother Gauss    filter} the linear filter  $T_{\g_0}$
 with  RP
 $$    \g_0(z) :=  \frac{1}{\sqrt{2\pi}} e^{- \frac{1}{2} \vert z\vert^2}\ .$$
A   {\it Gauss  filter}   is the  filter  with the RP that is  obtained from the  $\g_0$  by means of   a
 transformation of  the   oriented  affine group   $\operatorname{Aff}^+(\bR^2)   =
 \operatorname{GL}^+(n) \cdot \bR^n$, i.e. by a transformation 
  $$ z \overset{T_{A, \t}} \longmapsto Az + \tau\ \qquad \text{with}\ \det A > 0\ .$$ 
  The RP $\gamma _{A, \tau} $  of a Gauss   filter  determined by the transformation $T_{A, \t}$  is   
   $$ \g_{A, \t}(z) \=   \frac{1}{\det{A}}  \frac{1}{\sqrt{2\pi}} e^{-\frac{1}{2}\big\vert A^{-1}(z -\t)\big\vert^2}\ .
   $$
   Clearly, the     group $\operatorname{Aff}^+(\bR^2) $  acts    transitively   on    the      space $\mathcal{G}$
          of   all Gauss filters and its
      stability subgroup is $\SO(2)$.
 This   action  canonically  extends to an  action
   of the larger group $\SL(3, \bR)$,   so that $\mathcal{G} = \operatorname{Aff}^+(\mathbb R^2)/ \SO(2) = \SL(3,
   \bR)/\SO(3)$  \cite{L-M-R}.\par
   \medskip
      The  orbit  in $\cG$ of the mother Gauss filter under the  {\it similarity  subgroup}
      $$\Sim(\bR^2)= \CO(2) {\cdot} \bR^2 \simeq \bC^* {\cdot} \bC $$
  is  an important    submanifold  $\mathcal{G}_0 = \Sim(\bR^2)/\SO(2)$ of the manifold of all  Gauss filters.  If 
  we identify  $\bR^2$  with the  plane of complex numbers  $\bC = \{z = x +i y\}$, then 
  $\Sim(\bR^2)$ is identified with the {\it complex   affine   group}  $\bC^* {\cdot} \bC$ of  the complex transformations
    $$ z\overset{T_{a,b}}\longmapsto az+b\ , \qquad a \in \bC^*\ ,\ b \in \bC$$
and the  RP of the  Gauss   filter   associated with the transformation  $T_{a,b} \in \Sim(\bR^2)$  is 
  $$ \g_{a,b}(z) =  \frac{1}{\vert a\vert^2}   \frac{1}{\sqrt{2\pi}}  e^{- \frac{\vert z-b\vert^2}{2\vert a\vert^2}}.$$
 Note that   $\g_{a,b} = \g_{|a|,b}$ for any $a \in \bC^*$, meaning  that   the  subgroup $\bR^+ {\cdot}
  \bC \subset \bC^* {\cdot} \bC$ acts simply transitively on $\mathcal{G}_0$.
  The  parameter  $  \sigma = \vert a\vert $  is called the   {\it standard deviation}  and $b$ is  the {\it mean  value} of  $\g_{a,b}$.
Notice also  that, when   the    standard  deviation   $ \sigma = |a| $    tends  to $ 0$,    the Gauss  functional
 $$T_{ \g_{\s, b}}(I(z)) := \int I(z)\gamma_{ \sigma, b} \vol(z) $$
      tends   to   the Dirac    delta  function  at $b$,  i.e. to  the   functional  $\delta_{b}(I) \= I(b)$.
For this reason,  {\it the  Gauss   filters of $\mathcal{G}_0$  can be  taken  as    $\sigma$-approximations    of  the
functionals  $ I \mapsto I(b)$,  $b \in \bC$.}  \par

\medskip
 \subsection{Kuffler isotropic neurons and Marr  filters} \label{Kuffler}
   S. W. Kuffler \cite{Ku,H}   was  the  first who   detected   a  response  to   a  stimulus of  the retinal ganglion cells  in
   mammals.   He   described   the   structure  of  the isotropic (i.e. rotationally invariant) receptive field  of an
   (isotropic) neuron   as two concentric    discs  $D' \subset  D $  of the  retina and divided the isotropic neurons in two classes: the {\it ON-neurons} and the {\it OFF-neurons}.   The   receptive  profile $W(x,y)$ of an
   ON-neuron (respectively,  an OFF-neuron) is 
    \begin{itemize}
    \item positive (resp.  negative) in  the inner disc $D'$,
    \item negative  (resp.  positive)  in the  ring $D \setminus D'$
        \end{itemize}
and it is such that   $\int_{D} W(x,y) \vol = 0$. This  feature explains  why  the neurons of this kind  give  no   response   when  the input  function $I$ is constant.\par
  D. Marr    \cite{Marr}  showed  that  a linear filter  whose  RP  is the 
Laplacian  $\Delta \g_{a,b}$ of  a   Gauss  density   $\g_{a,b} \in \cG_0$
 provides  a     realistic    model  for a   Kuffler   neuron.  He also   explained  that  a  system of  such filters
 produces  a  regularisation   and    contourisation of  the input   function $I$. In  other  words, it  transforms  the
 retina  image   into  a   graphics picture.  This   is the    aim of  the data processing  in   retina.  We remark that systems of Marr filters (and of their generalisations) are used in computer vision to transform    pictures
 into graphics.\par
 Another realistic model  for a    Kuffler neuron  is  a linear filter   with RP  given by  the  difference between   two Gauss
 densities, both   with    same mean value $b$,    but  with  different standard  deviations $\s, \s'$.\par

 \par
 \medskip

\subsection{Koenderink's Multiscale Geometry of  Image  Processing}
  J. Koenderink \cite{K,F-R-K-V}    defines    the  {\it Multiscale Geometry}   as  the geometry  which  studies
   the $\s$-approximations to   Differential Geometry   for  arbitrary  resolution parameters  $\sigma$.
In   his  seminal paper \cite{K}, he  showed   that an image   can be  embedded  into  a one-parameter
  family of  derived  images, parametrised  by   the  resolutions $\s$ and   governed by  the heat   equation   or   other
  types of  diffusion equations,  in particular the anisotropic ones (the latter is particularly relevant in computer vision,
  as they do not  diffuse  edges).
\par
\medskip
\subsection{Derived   filters  and  Hansard and Horaud's simple  cells  of order  $k$} \label{derivedfilters}
 Let $X$  be   a vector  field on $\bR^2$, identified with  a derivation  of  the  algebra of  real functions.   Given a
 Gauss filter $T_{\g_{a,b}}$, the linear functional   $T_{X {\cdot} \g_{a,b}}$   with RP   $X {\cdot} \g_{a,b}$
  is  called     {\it derivative  of $T_{\g_{a,b}}$  in the  direction  $X$}.
 Integration   by  parts   shows    that the   limit  of  $T_{X {\cdot}\g_{a,b}}$  for    $\sigma = \vert a\vert  \to 0$
 is   the functional  which associate to any input function $I$ its directional derivative  $ -(X {\cdot} I)(b)$ at the
 point $b$.
In  differential geometry,    such a  functional is identified with  the tangent vector $- X_b$. For this reason, the
functional  $T_{X {\cdot}  \g_{a,b}}$  can be considered as    a    {\it  $\sigma$-approximation   of   the   tangent
vector  $-X_{b}$}. Similarly, the   functional   $T_{Y{\cdot}(X{\cdot} \g_{a,b})}$  is  a {\it $\sigma$-approximation of
the    second order differential operator at $b$ given by}
$$I \longmapsto Y{\cdot} (X{ \cdot} I)(b)\ .$$
M. Hansard and R. Heraud  \cite{H-H}  proposed  a   definition  of    {\it  a simple visual neuron  of  order  $k$}   as
the    filter with   RP  given by a linear combination of  directional derivatives of the form
    $$X_1 {\cdot} X_2 {\cdot} \ldots {\cdot}  X_k {\cdot} \g_{a,b} \  .  $$
As we remarked above, such an operator can be  considered as    a $\s$-approximation   of a linear combination of    differential
operators  of the form $(-1)^\ell X_1\circ \cdots \circ X_k\vert_{b}$.
    This means that, geometrically,    a simple  visual  neuron of  order  less than or equal to  $k$   computes a component of
    the   $k$-th order  jet   of   a contour.     We recall   that the  space   of  all $k$-jets  of  non-parametrised  curves
    of the  retina  $ R \simeq \bR^2$ are  locally  parametrised  by the  space  of $k$-jets  $J^k(\bR, \mathbb{R}^2)$  of the
    $\bR^2$ valued  functions  on  the  real line,  i.e.  the   space of  the Taylor polynomials  of functions of one real  variable and values in $\bR^2$ of  degree   less
    than or equal to $k$.\par
    Following an  idea by   Hubel and  Wiesel,   Hansard and  Horaud     also   proposed    the notion  of  a {\it complex
    visual cell}  as a composition of   simple  visual neurons of the above kind.
\par
\medskip
\subsection{Gabor   filters   and  simple  cells of   V1  cortex} \label{sect3.5}
   Roughly speaking,   a     {\it Gabor    filter}  is  a linear functional  with  RP    given by a  Gauss  function
   modulated  by $\cos y$  or  $\sin y$ (see  \cite{D,Sz} and references therein). More precisely, it is defined as follows.
   Let  us denote by  $\Gabor_{\g^+_0}$  and   $\Gabor_{\g^-_0}$   the linear filters    with    RPs
       $$  \g^+_0(z):= \g_0(z)\cos y = e^{-\frac{1}{2}\vert z\vert ^2} \cos y\ ,\qquad  \g^-_0(z):= \g_0(z)\sin y   = = e^{-\frac{1}{2}\vert z\vert ^2} \sin y\ ,   $$
  respectively (here,  as usual, $z = x +iy$). These RPs  conveniently combine into  the   {\it complex} RP
        \beq \label{mothergaborcomplex}  \g_0^{\bC}(z) := \g_0(z) e^{iy} = e^{-\frac{1}{2} \vert z\vert ^2 +iy}  =   \g^+_0(z) + i  \g^-_0(z) \ . \eeq
   The  complex  filter  $\Gabor^{\bC}_{\gamma_0}$    with   RP
$\g^{\bC}_0$     is called    {\it complex mother  Gabor  filter}, while  its   real  and imaginary parts  $\Gabor_{\gamma_0^+} $  and
  $\Gabor_{\gamma_0^\bC} $     are called   {\it  even  and  odd  mother Gabor filters}.
      In analogy with the definition of the Gauss filters, the {\it Gabor   filters}   are the linear   filters
      $\Gabor_{\g_{a,b}^\pm}$ that are  obtained   from  the mother  filters  $\Gabor_{\g^\pm_0}$   by   
      transformations $T_{a,b} \in \Sim(\bR^2) = \bC^* {\cdot} \bC$.\par
    The RP $ \g^\pm_{a,b}$  of  these linear  filters  can be explicitly given determined   from  $\g_0^\bC$ as
    follows.
   Let  $a \= \s e^{i \theta}$,  $b_0 \= \Re(b)$, $b_1 \= \Im(b)$. Then,  by  \eqref{density},  the RP $ \g^\pm_{a,b}$
       are the  real and imaginary parts of the complex RP
         \begin{multline} \label{tranG}
         \g^\bC_{a,b}(z) = \g^+_{a,b}(z) + i \g^-_{a,b}(z)
          = T_{a,b}(\g^\bC_0(z))  =  \frac{1}{\sigma^2} e^{-\frac{1}{2} \vert z\circ T_{a,b}^{-1}\vert ^2 +iy\circ T_{a,b}^{-1}}  =\\
         = \frac{1}{\sigma^2}e^{-\frac{\vert z-b\vert ^2}{2\sigma^2} + i \frac{- (x - b_0)\sin \theta  + (y - b_1) \cos
         \theta }{\s}}\ .
         \end{multline}
Since  $\Sim(\bR^2) = \bC^* {\cdot} \bC$ acts  simply transitively  on the  family  of  all Gabor  filters of
the plane,  these  filters constitute  a manifold $\mathfrak{Gab}$  which we  identify with 
$ \mathfrak{Gab} =  \bC^* {\cdot} \bC$. \par
 J. D. Daugmanm   showed  in  \cite{D} that there exists  an uncertainty relation  between    orientation and    spatial frequency of stimuli  and   that  a   Gabor   filter  optimises  such uncertainty  relation.\par
 \smallskip
        Other mathematical   models  for   simple   neurons   is given by the  derived filters  of  first  and  second orders of the Gabor filters
        in the sense of  Hansard and Heraud, that is the analogs of the derived filters considered   in \S  \ref{derivedfilters}.
            In fact,  these   types of  filters   are  very close one to the other  and, roughly speaking,   they are both appropriate  for   detecting   second order
            jets  of  contours.\par
\medskip

      \section{The functional  architecture  of   the  V1 cortex in  statics}
      \label{sect:3}

 \subsection{Structure of  the V1  cortex}
In this section we  briefly recall  some  well  known facts  on the structure of the V1 cortex.
 For  a much more  detailed   discussion and further information, the reader is referred to the   excellent    book by D. H. Hubel \cite{H}.\par
 \smallskip
The primary visual cortex  V1  is  a layer which   is  in average 2.5 mm  thick and  consists of   six  sublayers.   Hubel
and Wiesel  discovered  that  the neurons of the V1 cortex  are organised into  vertical columns.  Each  column  consists of $80-100$ visual
neurons (of which  approximately  25\%  are  simple  cells)   and  all of them have   almost the  same  receptive field (RF). There are two type of  columns.
A column   is called  {\it regular} if  its  simple neurons  have  almost  the same orientation, say $\theta_0 \in [ -\frac{\pi}{2}, \frac{\pi}{2})$.
This means that  each of them  fires only when   a   contour crosses   their  RF  with the    orientation  $\theta_0$
  (up to an error of  $15-20\% $).
  A column  is called {\it singular}  (or   {\it pinwheel}) if  it contains  simple cells that  can
    detect contours with {\it any} orientation.\par
    \smallskip
According to this,  to  each  regular column  one  may    attribute:   (a) the point $z$ of the retina $R$, which is the common RF of its simple cells and (b)
the  common fixed  orientation $\theta_0$  of its neurons.  On the other hand, to each pinwheel one may attribute  the common RF  $z \in R$ of its neurons and 
consider   the   orientations of  the simple cells of the column as the values of an  internal parameter, which  distinguish each neuron from the others. 
In other words,  the correspondence between points of the retina  and orientations $\theta$ is a well defined  function if it is restricted to  the subset of  the  points  $z \in R$ that are  the RFs of regular columns, but such a function  has 
singularities at the points that are RFs  of  pinwheels.
   \par
        \smallskip
    The  distance  between   two neighbour pinwheels in  the visual cortex  is in average equal to $1-2$ mm.
       They serve  as a sort of watch towers, which  can detect  contours  of   arbitrary orientations.
   One of  the  purposes of  the fixational eyes movements is  to produce   shifts of  the retina  images in  such a way that
   the contours  of such images can  intersect  the RFs of the  pinwheels, so that the pinwheels  can       detect  them.  \par
\par
\medskip
\subsection{The V1 cortex  as  a  fiber  bundle: The  ``engrafted  variables" of Hubel and Wiesel and  the hypercolumns}\label{hubell}
Hubel  and   Wiesel  remarked that   the  firing  of  the simple neurons  depends not only  on  the  coordinates
$z=(x,y)$   of
 their  RF in the retina $R$,  but  also  on  many other   parameters,  called {\it internal}, which   may be  considered  as fiber-wise
   coordinates of   a   fiber bundle over  the retina. In fact, they  proposed  {\it the  first fiber bundle  model  for the
   V1 cortex}, the so called  {\it ice-cube model}   with  two internal parameters:  the orientation   and   the ocular
   dominance.\par
    In  \cite{H}  Hubel wrote:  "What the cortex does is to map not just two but
many variables on its two-dimensional surface. It
does so by selecting as the basic parameters the
two variables that specify the visual field coordinates
(distance out and up or down from the fovea),
and on this map it engrafts other variables, such as
orientation and eye preference, by finer subdivisions."\par
 The complete   set of  the  internal parameters,  which  are relevant  for  the visual  system,  is  not known.
The  most  important internal   parameters  are  the {\it  orientation}   and  the
{\it    spatial  frequency}, but there are many other relevant  parameters,  as for instance the  parameters of  the {\it
color   space}, the {\it contrast},  the {\it curvature},  the {\it temporal  frequency}, the {\it ocular  dominance},
the {\it disparity} and the {\it direction  of the  motion}.\par
Another  fruitful   idea by Hubel and Wiesel   is the  notion  of {\it hypercolumn}.
They  define  a hypercolumn as a collection of  columns consisting of  simple neurons
which  can  measure  any   possible  value   for   the   internal parameters.
Unlike the columns,  the existence of hypercolumns is not  physiologically confirmed
 and it has been the subject of debate for a long time. 
 \par
\medskip

\section{Eye movements   and the  configuration space of the  eye }\label{sect 4}
\subsection{The eye as a rigid rotating ball}\label{subsect  4.1}
>From  a mechanical point of view, the  eye  is a rigid  ball  $B_{\text{eye}}$,  which can  rotate around its center  $O$.
The  retina approximately occupies      two thirds   of the  eye  sphere $ S^2_{\text{eye}}= \partial B_{\text{eye}}$,   but,  for  simplicity,  in what follows  we  identify  it  with  the
whole eye sphere. As we mentioned in \S \ref{sec:1}, in this paper we   assume
that  the  {\it eye nodal point}   (or {\it optical center}) $\cN $ is  in  the eye sphere  and   that  its opposite
point $\mathcal F \in S^2_{\text{eye}}$  is the center of  the  fovea.\par
\smallskip
  There is a  primary (standard) position  for  the  eye, which is  characterised   by    the  orthonormal frame
  $\left(O, (\underline i,\underline j,\underline  k)\right)$ at the center $O$ of  $S^2_{\text{eye}}$,
  with   $\underline i$  giving    the  frontal direction of the  gaze,  $\underline j$   orthogonal  to $\underline i$ and    giving the   horizontal   direction oriented  from left to right,
and $\underline k$ giving   the  vertical direction oriented  from bottom to top.  This  frame determines  the  primary
  ``fixed head centred  coordinates''  $(x,y,z)$ for the  Euclidean space $\bR^3$ with  origin  $O = (0, 0,0)$. \par
Any other position of the  eye sphere is  determined  by an  orthogonal transformation $R \in \SO(3)$ around  the point $O$ (which
clearly preserves $S^2_{\text{eye}}$) that  transforms  the orthonormal frame  $(\underline i,\underline j,\underline k)$ into
another     $(\underline i',\underline j',\underline k') = R(\underline i,\underline j,\underline k) $  in which   $\underline
i'$  gives  a  new gaze direction. We recall that  any  orthogonal transformation  $R \in \SO(3)$  is a   rotation    around some    fixed axis   through the origin  $O$. In the following,
given a unit vector   $\underline e$ and an angle $\a$, we denote by $R^\a_{\underline e}$ the rotation of angle $\a$ around the axis determined by $\underline e$.
\par

  \subsubsection{Helmholtz's physiological  definition of  a ``straight line''}
  H.  von Helmholtz  gave  the  following physiological definition of a straight line \cite{He} (see also \cite{R-B} and
  references therein):\\[5 pt]
 {\it A straight  line is a curve $\ell= \{\gamma(t), t \in \bR\}$ of the Euclidean space  $\bR^3 $, which is
 characterised by the  following property: when the  gaze moves  along  the curve $\gamma(t)$, the  retinal image  of
 $\ell$  does not change}  (Fig. 3 \& 4).\\[5pt]
     Indeed, let  $\ell =\{ \gamma(t), \, t \in \bR \}  \subset   \bR^3$  be  a    straight line   and  $\Pi$ the
(unique) plane  through  $\ell$ and the  nodal point $\cN$.
   When      the  eye looks at a point   $\gamma(t)$ of the  line    $\ell$, the corresponding   image  on the  retina
   is determined  by the  central projection  $\pi_{\cN}:  \ell    \to  S^2_{\operatorname{eye}}$     and  this  is just the
   intersection  between  $S^2_{\operatorname{eye}} $ and the plane $\Pi$. Hence such an image is always the same and it 
   does not depend on the  point $\g(t)$ of the line $\ell$ toward which the gaze is directed.   \par

 \includegraphics[width=5cm ]{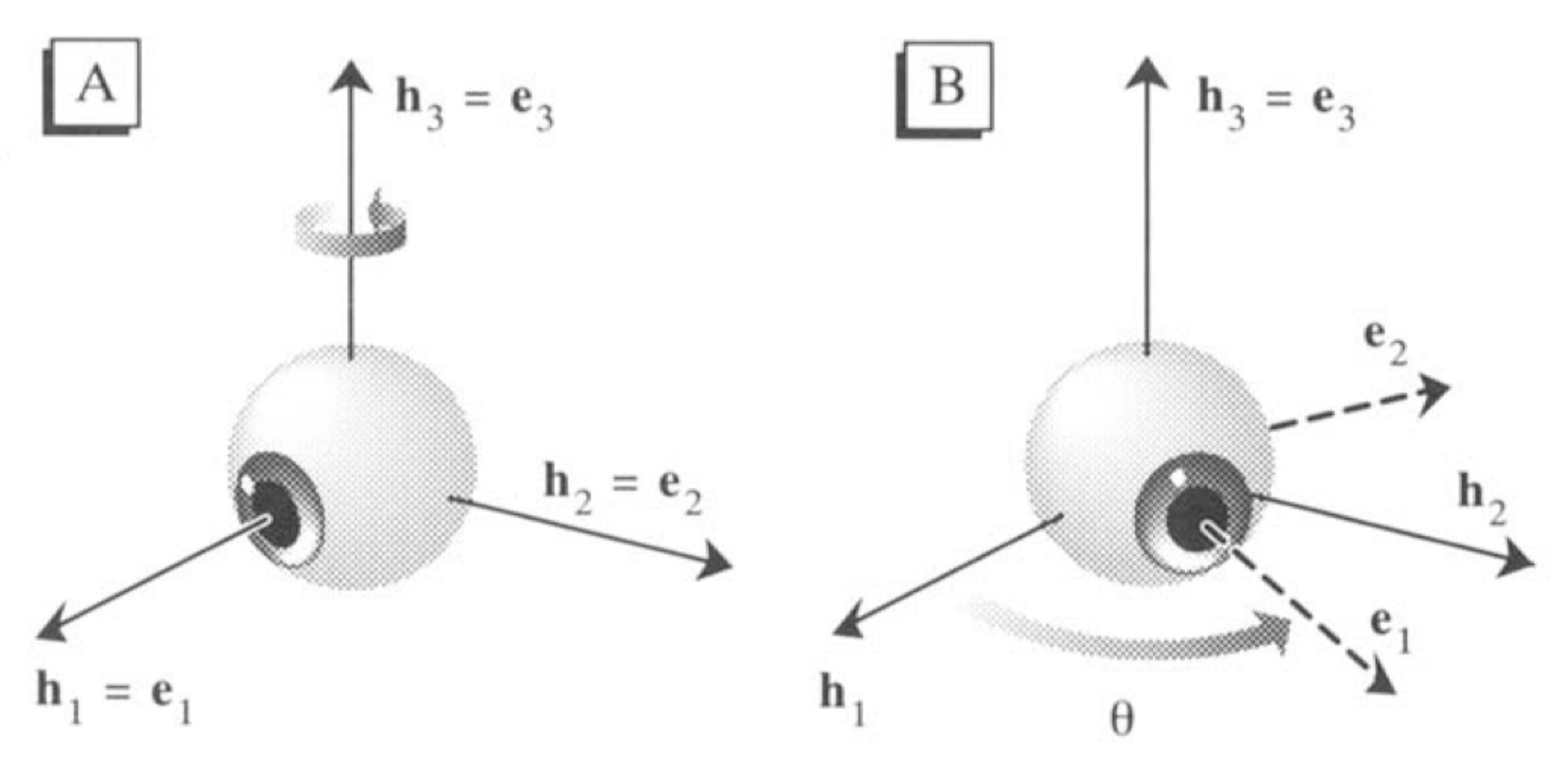} \hskip 2 cm
    \includegraphics[width=4cm, height = 2.6cm]{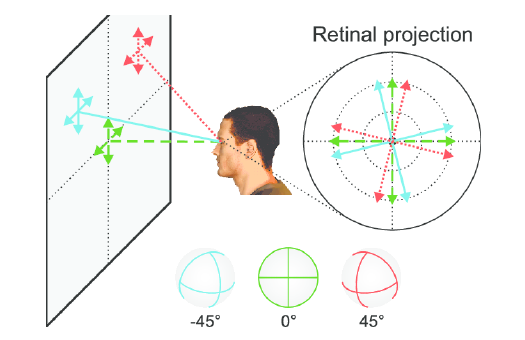}
   \ \\ [-2.5 cm]
 \begin{picture}(280,75)(0,0)
 \setlength{\unitlength}{1pt}
 \linethickness{0.3cm}
\put(47,65){\color{white}\line(1,0){20}}
\put(120,65){\color{white}\line(1,0){20}}
\put(15,10){\color{white}\line(1,0){20}}
\put(89,10){\color{white}\line(1,0){20}}
\put(110,9){\color{white}\line(1,0){20}}
 \linethickness{0.25cm}
\put(135,19){\color{white}\line(1,0){10}}
\put(135,45){\color{white}\line(1,0){7}}
\put(143,30){\color{white}\line(1,0){7}}
\put(61,31.45){\color{white}\line(1,0){20}}
\put(23,15){\tiny$\underline i$}
\put(142,15){\tiny$\underline i'$}
\put(95,15){\tiny$\underline i$}
\put(123,65){\tiny$\underline \omega = \underline i \times \underline i'$}
\put(110,10){\tiny$R^\alpha_{\underline \omega}$}
 \end{picture}
 \par
   \centerline{\bf \small Fig. 3 \& 4 (from \cite{BDL})-- Listing's Law and Helmholtz's definition of ``straight line''}
%

\subsection{The configuration  space of   the  eye -- Donders' and Listing's laws}\label{subsect  4.2.}
{\it  Donders' law}  states  that when   the head is  fixed,  the position of the eye is  completely determined  by the  unit
 vector  $\underline i'$ that gives the direction of the gaze.  This implies that the  set $\Sigma $  of the possible
 positions   of  the  eye  is a  surface  in  the  orthogonal group $\SO(3)$ (corresponding to the  set of all admissible
 gaze directions  $\underline i' \in S^2$). More precisely, we have
\par
\begin{theorem}[Donders' ``no twist''  law]  If the head is  fixed, the  direction   $\underline i'$ of  the gaze
determines  the position  of  the eye ball and does not depend on the previous eye movements.
\end{theorem}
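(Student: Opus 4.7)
The plan is to extract the mathematical content of Donders' law, viewing it as the assertion that the gaze-direction map admits a well-defined inverse on the set of physiological eye positions. Since the law is ultimately an empirical statement, any proof must derive it from a structural hypothesis; the standard refinement is Listing's law, which specifies the axis structure of admissible rotations. My plan is therefore (a) to set up the kinematic framework so that the theorem becomes a geometric claim about a submanifold of $\SO(3)$, and (b) to verify that claim using Listing's law as an input.

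First I would formalise the configuration space: a position of the eye ball is an element $R \in \SO(3)$, and its associated gaze direction is $R\underline i \in S^2$. The map $R \mapsto R\underline i$ is a principal $\SO(2)$-bundle $\SO(3) \to S^2$, whose fibres parametrise the torsional (cyclorotational) freedom about the gaze axis. In this language, Donders' law reads: the admissible eye positions form a submanifold $\Sigma \subset \SO(3)$ of dimension $2$ on which $R \mapsto R\underline i$ restricts to a diffeomorphism onto an open subset of $S^2$. The independence from previous eye movements is then automatic once $\Sigma$ is exhibited as a section of the bundle, because the value of the section at $\underline i'$ is intrinsic to $\underline i'$ and does not remember a path in $\Sigma$ leading to it.

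The second step is to construct $\Sigma$ explicitly via Listing's law, which I would take as a physiological input: the axis $\underline e$ of every admissible rotation $R = R^{\alpha}_{\underline e}$ from primary position lies in \emph{Listing's plane} $L = \mathrm{span}(\underline j, \underline k)$, the plane perpendicular to the primary gaze $\underline i$. The parametrisation $(\underline e, \alpha) \in (S^1 \cap L) \times [0, \pi] \longmapsto R^\alpha_{\underline e} \in \SO(3)$ sweeps out a $2$-dimensional subset, which I would identify with $\Sigma$. To verify the bijection onto $S^2 \setminus \{-\underline i\}$, I would argue as follows: for any target $\underline i' \neq -\underline i$ the axis $\underline e \in L$ of the required rotation must be orthogonal to the plane $\mathrm{span}(\underline i, \underline i')$; since $\underline i \perp L$, this plane meets $L$ in a single line, determining $\underline e$ up to sign, and the angle $\alpha \in [0, \pi]$ is then fixed uniquely by $R^\alpha_{\underline e} \underline i = \underline i'$. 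This produces the inverse of the gaze-direction map on $\Sigma$ and closes the argument.

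The main obstacle is conceptual rather than computational: the statement labelled ``Donders' law'' is an empirical claim, so the honest mathematical content of the theorem is only the geometric consistency --- that such a submanifold $\Sigma$ does exist and is cut out by a clean algebraic condition on the axis of rotation. The deeper reason why the oculomotor system chooses to move the eye within $\Sigma$ rather than explore the full $\SO(3)$ is not accessible to a purely mathematical argument; the role of the proof is to show that the physiological postulate (Listing's law) is geometrically coherent and supplies precisely the parametrisation claimed by Donders' ``no twist'' statement.
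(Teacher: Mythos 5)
The paper offers no proof of this statement: Donders' law is presented as an empirical, physiological law (a ``theorem'' only in name), and the text immediately passes to its consequence, namely the existence of a (local) section $s : S^2 \to \SO(3)$ of the frame bundle whose image is Listing's surface $\Sigma$. So there is no argument in the paper for you to match; what you have written is a mathematical formalisation that the paper leaves implicit. As such, your argument is correct: if one takes as the physiological postulate that every admissible position is a rotation $R^\alpha_{\underline e}$ with axis $\underline e$ in Listing's plane $L = \underline i^{\perp}$, then for a target gaze $\underline i' \neq \pm\,\underline i$ the constraint $\underline e \cdot \underline i = \underline e \cdot \underline i' = 0$ forces $\underline e \perp \mathrm{span}(\underline i, \underline i')$, determining $\underline e$ up to sign and then $\alpha$ uniquely, so the gaze map restricts to a bijection $\Sigma \to S^2 \setminus \{-\underline i\}$ and the position cannot remember the path by which it was reached. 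The one point worth flagging is a logical inversion relative to the paper: there, Donders' law is the primitive statement (it is what guarantees that ``the position with gaze direction $\underline i'$'' is well defined), and Listing's law is stated afterwards as a refinement that identifies \emph{which} surface $\Sigma \subset \SO(3)$ the section traces out --- indeed the paper's formulation of Listing's law already presupposes Donders'. Your reading, which takes Listing's axis condition as the hypothesis and derives Donders' law as its injectivity statement, is a defensible and arguably more contentful organisation, but you should state explicitly that you are reordering the two empirical inputs rather than proving Donders' law from nothing.
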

\noindent This law implies  that there is a (local) section
$$s : S^2 \longrightarrow \mathcal{OF}(S^2)  =\SO(3)$$
 of the bundle of the oriented orthonormal
frames
$$\mathcal{OF}(S^2)  = \SO(3) \longrightarrow S^2= \SO(3)/\SO(2)$$  over  the  sphere $S^2$ of unit vectors, which
transforms any   curve of gaze directions
$$t \longmapsto \underline i(t) \in S^2$$
into  the     curve
$$t \longmapsto s(\underline i(t))\in \SO(3) = \mathcal{OF}(S^2) \ ,$$
 given by the rotations $R(t) = s(\underline i(t))$ of the eye ball associated with   the gaze curve    $\underline i(t)$.
The   {\it Listing's
law}   determines this surface as follows.
\begin{theorem}[Listing’s law -- see e.g.  \cite{A}] The movement  from the standard   eye  position,    associated  with
the  gaze    direction $\underline i \in S^2_{\text{\rm eye}}$  to  a new position,  associated with  a  gaze $\underline
i' \in S^2_{\text{\rm eye}}$,  can be always realised    by the rotation $R^{\alpha}_{\underline \omega}$   by  the angle
$ \alpha = \wh{\underline i , \underline i'}$       around    the axis  $\bR\underline \omega $ where $\underline \omega$ is the unit vector $\underline \o
\= \underline i \times \underline i'$,   belonging  to the   {\rm  Listing    plane} $\underline
i^{\perp}$.\par
 More generally,   the  movement  from a  gaze direction  $\underline a \in S^2_{\operatorname{eye}}$ to a   gaze
 direction  $\underline b \in S^2_{\operatorname{eye}}$  is realised by   the  rotation $R^{\alpha}_{\underline{\omega}}$
 by the  angle
$\alpha = 2 \wh{(\underline i+\underline a), (\underline i+\underline b)}$  around  the axis  $\bR \underline  \omega$,
$\underline \omega \=( \underline i+\underline a) \times(\underline i+\underline b) $.
The  trajectory of the gaze is  the   arc  $\overset{\frown}{a  b}$  in the (not necessarily great)   circle
 $$S^1_{\underline a,\underline b}  = S^2_{\operatorname{eye}} \cap \Pi ( \underline a ,\underline b, - \underline i)$$
given by the intersection  between the  eye sphere and  the  plane
 $\Pi ( \underline a , \underline b, - \underline i)$  through the points  $\underline a,\underline b$ and  $-\underline i$.   \end{theorem}
  Donders' and Listing’s  laws imply      that the configuration space of  the  eye positions   is  the  surface $\Sigma
  \subset \SO(3)$  made of the rotations  $R^{\alpha}_{\underline{\omega}}$
 around the axes  $\bR \underline \omega$ lying in    the Listing plane
   $ \underline i^{\perp}= \text{span} ( \underline j,\underline k)$.  A   description   of  the configuration   space of
   the  eye in terms of the Hopf bundle is  given in \cite{A1}.
  \par

\subsection{Eye movements: the   saccades   and the   fixational  eye movements  }\label{4.3}
The   eye movements    are very important  for    vision:
Classical    experiments  by  A. Yarbus  show that a  compensation of all   eye movements  leads to a loss  of  vision of stationary objects  in   $2-3$  sec.
\par

One of the possible reasons for this   is  that the  firing of
many  visual neurons    depends on  the   rate of  change of  the  internal parameters of the  stimuli.
In  this    subsection  we shortly   discuss some special eye movements that  occur when the gaze is  ``fixed", i.e.
 concentrated on  a  stable  external point:  the  {\it saccades}     and the 
 {\it fixational  eye movements}  (see e.g. \cite{R-P}).
\par
\smallskip
 The {\it saccades}  are very rapid rotations of the eyes  (up to
$700^\circ$/sec  in humans) of large
amplitudes.  It is  a ballistic  movement:  After  the start, the oculomotor  system   cannot   change  its  trajectory.
  For  any  two  admissible (i.e. belonging to the same field of   vision) gaze  direction  $\underline a,\underline b    \in S^2_{\text{eye}}$, there  exists  a unique  saccade     which    change  the  gaze  direction  $\underline a$  to $\underline b$ and   it is  realised  by   the  rotation  around  the   axis, which is   orthogonal  to the  plane $\Pi ( \underline a , \underline b, - \underline i)$,   as it  describes  above in the Listing's law. The     corresponding   evolution of the gaze  describes  a geodesic  of $ S^2_{\text{eye}}$   if and only if    both $\underline a,\underline b$  lie in a  meridian  of $S^2_{\text{eye}}$  through  the poles  $\underline i$ and $-\underline{i}$.\par
\smallskip
The {\it fixational
eye movements}  include the {\it tremor}, the {\it drifts} and the {\it microsaccades}.
The
{\it tremor}   is an aperiodic  wave-like motion of the eye of high
frequency  ($40-100$ Hz) but  of  very small amplitude  (in between a few arcsecond and  a few arcminutes) \cite{A-A}.
Under the  tremor  the  gaze fills  a cone  in one tenth of a second.
\smallskip
The {\it  drifts} occur simultaneously
with the tremor and are slow motions of the eye,  with frequency $1-20$ Hz  and  amplitudes  up to  $10'$. During   a drift,
the image of a fixation point   stays within the fovea.
 The  {\it   microsaccades}  are  fast  short   jerk-like  movements  with amplitude $2' - 120'$  (see e.g. \cite{P-R}).
\par\smallskip
 In $1$ second the tremor moves the retinal image of  $1 - 1.5$
diameters of the cones of the central part of the fovea, a  drift moves along $10 - 15$ diameters of the  cones and the
microsaccades carry an image
across dozen to several hundred diameters of the cones.\par
\smallskip
\subsubsection{The  purposes of the saccades and   the fixational  eye movements}\hfill \par
(i) When   the  eye   is    fixed, the
 visual information comes from the  light rays   passing through  the lens of  the eye, which    can be identified
 with the  rays   of the central projection  by the nodal point $\cN$  -- see  \S \ref{sec:1}, n.3.
  These rays determine
 a  $2$-dimensional   submanifold
$\text{L}(\cN) \subset \text{L}(\bR^3) $ of the space $ \text{L}(\bR^3)$ of all lines of $\bR^3$.
Thanks to the fixational eye movements, the brain receives information from a neighbourhood of $\text{L}(\cN)$  in   the
four dimensional manifold $ L(\bR^3) $.\par
(ii) The fixational eyes  movements  produce  also  shifts of  the  retinal image, which allow    the contours to
intersect  the  receptive  fields of the pinwheels, so that they can be    detected  by the simple cells  of those
pinwheels.
These  movements  are essential for perceiving   immobile objects. \par
(iii) We conjecture that the  main aim of   the tremor is to increase the width of the retinal  contours,  so that they   can be
detected by several rows of photoreceptors. This allow  to estimate the gradient of the input function at   the points
of the contours.
\par

\medskip

\section{A short introduction to the conformal geometry of the sphere}\label{section  5}
\subsection{The conformal sphere and its group of conformal transformations}\label{subsection  5/1/}
In this subsection, we recall  some basic  notions  and  results  of the  conformal   geometry    of the 2-sphere. For
more extended  discussions and many additional  references for this topic, see e.g. \cite{Ko, St, A-G,P-S,Su}.\par
\smallskip
\subsubsection{The conformal sphere}  \label{confsphere} The {\it conformal  structure}  of a Riemannian  manifold $(M, g)$ is    the   class   $[g] = \{ \lambda g\
: \   0< \lambda  \in C^{\infty}(M)\}$ of  all metrics that are  conformally  equivalent to $g$. The  {\it conformal
group   of $(M, [g])$} is  the   group     $ \operatorname{Conf}(M)$  of the   transformations,   which   preserve the
conformal structure $[g]$. 
\par
The {\it conformal sphere} is the sphere $S^2 = \{\ x^2 + y^2 + z^2 =  1\}\subset \bR^3$  equipped with the conformal
structure $[g_o]$,  determined by  the standard round metric  $g_o$.  The {\it conformal plane} $(\bR^2, [g^E])$ is the plane $\bR^2$, 
  equipped with the conformal structure $ [g^E]$ determined by the standard Euclidean metric $g^E$ of $\bR^2$. \par
\smallskip
Let  $\Np$ and $\Sp$ be  the
north pole  $\Np = (0, 0, 1)$   and  the
south pole  $\Sp = (0, 0, 1)$ of   $S^2$, respectively,  and let
$  \ster_{\Np} : S^2 \setminus \{\Np\} \longrightarrow  \bR^2$ be the  stereographic projection with respect to $\Np$ from  $S^2 \setminus \{\Np\}$  onto the   tangent plane $T_{\Sp} S^2 =\bR^2 $  at the south pole   $\Sp$.    It is a conformal  diffeomorphism between the conformal sphere minus the north pole  $(S^2 \setminus \{\Np\}, [g_o])$ and the conformal plane $(\bR^2, [g^E])$.  Such 
a diffeomorphism has a  unique  extension to  a diffeomorphism $\f: S^2 \to \wh \bC = \bC \cup \{\infty\}$ between   $S^2$ and  the Riemann   sphere $\widehat \bC =  \bC \cup  \{\infty\}$,   mapping  $\Np$
  to  $\infty$.  This diffeomorphism  can be used to induce a  Riemann metric  $g_o' \= \f_* g_o$  and a corresponding  conformal structure $[g'_o]$ on $\wh \bC$. In this way,   $\f$ is  a conformal equivalence between $(S^2, [g_o])$ and $(\wh \bC, [g_o'])$. \par
 \smallskip 
Identifying $S^2 \setminus \{\Np\}$ with $\bR^2$ by means the stereographic projection $\ster_{\Np}$,  the  complex   coordinate   $z =x +iy $ of $ \bC$  can be considered as   a complex 
  coordinate  on       $S^2\setminus \{\Np\}$.    It  is called     {\it standard (local) holomorphic coordinate}  for the $2$-sphere.
  \par
 \subsubsection{The conformal group of  $S^2$} \label{Gaussdecomp}The group  
$\operatorname{Conf}(S^2)$  of  the conformal transformations of $(S^2, [g_o])$ has two connected components.
The    universal covering of   the connected component of the identity $ \operatorname{Conf}^o(S^2)$ of  $\operatorname{Conf}(S^2)$  is    $ \SL(2,\bC)$  and     
 \beq \label{8}    \operatorname{Conf}^o(S^2) = \SL(2,\bC)/\{\pm I_2\}    \simeq \operatorname{SO}^o(1,3)\eeq
(see   also  \S \ref{subsect:5.2}). 
  By means of the conformal equivalence  $\f: S^2 \to \wh \bC$,   each    $f \in  \operatorname{Conf}^o(S^2)  $ corresponds to   the conformal transformation $\wh f \=  \varphi \circ f \circ \varphi^{-1}$ of the Riemann sphere $\wh \bC$. 
The   transformations $\wh f$  are  the   {\it linear fractional transformations}, i.e.  the transformations of the Riemann sphere  of the form
\beq \wh f(z) =    \frac{az + b}{cz +d}\ ,\ \qquad    A  = \left( \begin{array}{cc} a & b\\ c & d\end{array}\right)\in \SL(2,\bC) \ .
\eeq
The  map    
$$A = \left( \begin{array}{cc} a & b\\ c & d\end{array}\right)  \in \SL(2,\bC) \longmapsto \wh f_A(z) \=  \frac{az + b}{cz +d}  \in \operatorname{Conf}(\wh \bC)$$
 is a group homomorphism and its kernel contains  exactly  two elements,      $\pm I_2$.    it follows that  
 is
$
\operatorname{Conf}^o
(\wh \bC) \simeq \SL(2, \bC)/\{\pm I_2\}
$,   as  we claimed above. \par
\smallskip
Consider now the subgroups of $\SL(2, \bC)$  defined by    
\beq\label{gd}  \begin{split} & \hskip4  cm  \bC^* = \left\{ \begin{pmatrix}
a&  0\\
0& a^{-1}  \end{pmatrix}, \, a\in \bC\setminus\{0\} \right\} \\
& N^+  =  \left\{  A   =
\begin{pmatrix}
1&  b\\
0& 1 \end{pmatrix}\ ,\ b \in \bC\ \right\}\qquad \text{and}\qquad N^-  =  \left\{  A   =
\begin{pmatrix}
1&  0\\
c& 1 \end{pmatrix}\ ,\ c \in \bC\ \right\}\ , 
\end{split}
\eeq
i.e.  the  diagonal subgroup and the   unipotent  subgroups of  upper  and  low  triangular  matrices of $\SL(2, \bC)$, respectively.
Let  also denote by  
$$B^- \=  N^- {\cdot} \bC^*\ ,\qquad B^+  \= \bC^*  N^+$$
 the   Borel     subgroups of   $\SL(2, \bC)$ given by 
    the  lower  triangular and the upper  triangular    matrices, respectively.
Notice that each matrix $A$ of   the  open  and dense   subset   
$$\cU = \{ A =\left( \smallmatrix a & b\\ c & d \endsmallmatrix\right) \ :\ a \neq 0\}  \subset \SL(2, \bC)\ , $$
can be decomposed into  the product 
\begin{multline*} A = A^-  A^0 A^+\ ,\\
 \text{with}\   A^- = \left(\begin{array}{cc} 1 & 0\\ \frac{c}{a} & 1 \end{array} \right)\in N^{-} \ ,\     A^0  = \left(\begin{array}{cc} a &0 \\ 0 & \frac{1}{a} \end{array} \right)\in \bC^* \ ,  \ A^+ = \left(\begin{array}{cc} 1 & \frac{b}{a}\\ 0 & 1 \end{array} \right)\in N^{+}\ ,\
\end{multline*}
meaning that  $\cU \subset \SL(2, \bC)$  admits the  {\it Gauss decomposition}  
$ \cU =  N^- {\cdot} \bC^* {\cdot} N^+$.\par
  \smallskip
 We conclude this short subsection with a lemma in which we  summurize  some known  facts  on the  transformations in  $\operatorname{Conf}(S^2) = \operatorname{Conf}(\wh \bC)$, which  are direct consequences of the above descriptions of the  conformal transformations in terms of the linear fractional transformations of the Riemann sphere.
  In the statement,  for any given point $z \in S^2 = \widehat \bC$,  we     denote    by  $(\SL_2(\bC))_z$ the stability
  subgroup   $\SL_2(\bC)$ at  
    $z$ and by $j_z : (\SL_2(\bC))_z \to \operatorname{GL}(T_zS^2)$ the corresponding  isotropy representation.
 \par
  \smallskip
\begin{lemma}  \label{Lemma}\hfill\par
    \begin{itemize}[leftmargin = 20pt]
\item[i)] The  diagonal subgroup   $\bC^*  \subset \SL(2, \bC)$ lies   in the intersection $(\SL_2(\bC))_0 \cap (\SL_2(\bC))_\infty$ of the stability subgroups at the points  $0, \infty \in \wh \bC$  ( =   points corresponding to  $\Sp, \Np \in S^2$) and the  isotropy representations    at such points are 
$$  j_z(\bC^*)=  \operatorname{CO}(2) =\bR^+ \cdot  \SO(2)\ ,\qquad z = 0, \infty\ . $$
Further, $\bC^*$ acts on 
$ \bC = \widehat \bC \setminus \{\infty\}$ as   the   linear  similarity group    $\CO(2) = \bR^+  {\cdot} \SO(2)$.  
\item[ii)]  The stability   groups $(\SL(2, \bC))_0 $ $(\SL(2, \bC))_\infty$  at the points $0$  and $\infty$ are    the   Borel     subgroups
$$(\SL(2, \bC))_0 = B^- =  N^- {\cdot} \bC^*\ ,\qquad 
 (\SL(2, \bC))_\infty = B^+  = \bC^*{\cdot}  N^+$$
The subgroups $N^-$ and $N^+ $ are in the kernels of the two isotropic representations, while  the  isotropy   groups  $j_0(B^-)$,  $j_\infty(B^+)$ are  both isomorphic to  $\operatorname{CO}(2)  =\bR^+ \cdot  \SO(2)$.
\item[iii)]  The  group  $B^+  =  \bC^* N^+ =  (\SL(2, \bC))_\infty $   (resp. the subgroup  $N^+ \subset  (\SL(2, \bC))_\infty$) acts   on  $\bC  = \wh \bC \setminus \{\infty\} \simeq S^2 \setminus \{N\}$
as  the     group $\operatorname{Sim}(\bR^2)$  of the similarities (resp. the group $\bR^2$ of translations)  of the plane: 
$$ z \longmapsto   az +b \qquad (\ \text{resp.}\ z \longmapsto z + b\ ) , $$
\end{itemize}
\end{lemma}
We finally remark  that,    since  $j_0(N^-)  = \{\Id\}$,    {\it on a sufficiently small neighbourhood
$\cU_S \subset \wh \bC \simeq S^2$ of   $0 $,  each transformation  of  $N^-$ acts    almost as  the identity map.}\par
\medskip
 \subsection{The canonical Cartan connection of the conformal sphere} \label{Cartanapproach} Let  $\cF^o$   be  the  standard  frame   of the tangent space  $T_0 \bR^2 \simeq \bC$  of the plane at   $0$, i.e. 
 $$\cF^o =  (0,  (f_1^o = (1,0), f_2^o = (0,1)))$$
  We call {\it oriented
 conformal    frame   at a   point  $z \in \bC $} an  orthogonal  frame  $\cF =(z, (f_1,f_2))$  for   $T_z \bC$, which has   the
 same   orientation  of   $\cF^o$   and it is given by  two vectors $f_1,   f_2$   of    equal  length.   A  conformal  frame
 $\cF$ at    $z \in \bR^2=\bC  $   can  be also identified with  the  1-jet  $j^1_0(h)$  of the unique  conformal
 transformation $h  \in  \operatorname{Conf}(\bR^2) $  that  maps   $0$  into  $z =
 h(0)$ and   $\cF^o$ into  $\cF$.\par
 \smallskip
 As  the isometry  group $\operatorname{Iso}(\bR^2)$  of $(\bR^2, g^E)$  acts simply transitively on total space of  the bundle $\mathcal{OF}r(\bR^2) \to \bR^2$ of the  orthonormal frames of $(\bR^2, g^E)$, 
 the   similarity  group
$\operatorname{Sim}(\bR^2)$ acts   simply transitively on the total space of the   bundle    $\mathcal{CF}r(\bR^2) \to
\bR^2 $ of all   oriented    conformal  frames  of $\bR^2$ and   there exists a
$\operatorname{Sim}(\bR^2)_0$-equivariant  diffeomorphism
$$     \operatorname{Sim} (\bR^2) \ni  h \longmapsto      \cF_h  \=   \bigg(h(0),  (h_* f_1^o, h_*f_2^o)\bigg) \in \mathcal{CF}r(\bR^2)\ . $$

 The above  definition  of conformal frames and of 
 the bundle $ \mathcal{CF}r(\bR^2)\to \bR^2 $  can be   generalised  to the     case of  the conformal sphere as follows.   
A  {\it second order conformal frame at a point $ p \in S^2 $} is  the  $2$-jet
$ j ^2 _0 \left(h \circ \f^{-1}\right)$   at $0$ of  the composition between the conformal equivalence $\f^{-1}: \wh \bC \to S^2$  and a conformal transformation  $h \in
\operatorname{Conf}^o(S^2)$  mapping  $\Sp$ to     $p = h(\Sp)$.
 The connected  conformal group    $G= \operatorname{Conf}^o(S^2)$ acts simply transitively   on the   manifold  $
 \mathcal{CF}^{(2)}(S^2)$ of all second order conformal frames at the points of $S^2$ and the   principal bundle  
$$\pi:  \mathcal{CF}^{(2)}(S^2) \longrightarrow S^2$$ is identifiable with the   homogeneous bundle
$$ \pi:  G = \operatorname{Conf}^o(S^2) \longrightarrow      G/ G_{\Sp}  =  S^2$$
where  we denote by  $G_{\Sp}    \simeq \Sim(\bR^2)$  the  stabiliser  of the   south pole  $\Sp$.
The   left  action    of     $G  = \operatorname{Conf}^o(S^2) $ on itself   defines  a   left
invariant absolute parallelism   (i.e.  a flat  linear  connection)    on $G$, which   identifies    each  tangent   space
$T_g G$  of $G$  with  the Lie  algebra     $\gg:= T_eG  =  \mathfrak n^- + \mathfrak{co}(2)   + \mathfrak n^+$.     It is
called  the {\it canonical Cartan connection   of    the  conformal structure} of $S^2$.
 \par
\medskip
\noindent
{\it Remark.}   For  any   $n$-dimensional manifold $M$ of dimension $n \geq 3$,   with a  conformal structure $[g]$,   E. Cartan
   introduced     a canonical  principal     bundle of  second order  conformal  frames
$$\pi: P \longrightarrow   M  = P/ \operatorname{Sim}(\bR^n) $$
equipped with  a {\it Cartan connection}   (i.e. a   flat    linear connection on the total space   $P$ 
with  certain special  invariance properties with respect to the action of the  structure group $\operatorname{Sim}(\bR^n) $). 
If the dimension is $n = 2$, for a generic $2$-dimensional conformal manifold there is no such canonical principal bundle with a  Cartan connection 
 (it is a consequence of the fact that
  the   Lie  algebra of   the conformal   vector  fields   in the generic case is  
  infinite-dimensional).  However the conformal sphere  $(S^2, [g_o])$ is a  special case:
the above discussion show that it is possible to construct
   a  natural  bundle of second order frames  $\pi:  \mathcal{CF}^{(2)}(S^2) \longrightarrow S^2$
with a  Cartan connection, exactly   as it occurs for the conformal manifolds of higher dimension. 
\par
\medskip

\subsection{The M\"obius projective  model  of  the conformal  sphere} \label{subsect:5.2}
  Consider  an orthonormal frame $(e_0, e_1,e_2,e_3)$  for  the {\it Minkowski space-time} $\bR^{1,3}$, that is the vector
  space
   $$\bR^{1,3}= \mathbb{R}e_0 + \bR^3= \{ X= x^0e_0 + x^1e_1 + x^2e_2 + x^3 e_3  = (x^0,\vec x) \}\ ,$$
  equipped with   the  Lorentzian  scalar product
  \begin{equation} \label{lorentz} g(X,Y) = -x^0 y^0 + \vec x {\cdot} \vec y\ .\end{equation}
   The {\it light   cone}   at the origin  $0 = (0, \vec 0)$  is the subset of $\bR^{1,3}$ defined by
    $$V_0 = \{X \in V, \ g(X,X)=0 \}\ .$$
Up to  a scaling, the (connected component of the identity of the)  Lorentz  group $G =\SO^o(1,3)$  has    three  orbits
in  $\bR^{1,3}$, namely:
\begin{itemize}
\item $V_T = G{\cdot} e_0 =   G/\SO(3) $  (the {\it Lobachevsky space}),
\item $V_S = G{\cdot}  e_1  =  G/ \SO(1,2) $   (the {\it De Sitter  space}),
\item $ V_0 = G{\cdot}  p = G/\Sim(\bR^2) $ with  $p = \frac12 (e_0 + e_1)  $ (the {\it light cone}).
\end{itemize}
 \smallskip
 Consider the projectivisation $\bR P^3$ of the 4-dimensional vector space $\bR^{1,3}$ and the natural projection
 $$\pi: \bR^{1,3} \setminus \{0\} \longrightarrow \bR P^3 = \bP(\bR^{1,3}) \ ,\qquad \pi(v) = [v] \= \bR v\ .$$
 Under   $\pi$, the   three $G$-orbits  on $\bR^{1,3}$ are projected onto  the following three     $G$-orbits   in  $ \bR P^3 $:
\begin{itemize}
\item $V_T$ is  mapped  onto the  unit ball   $B^3 = \pi(V_T) \simeq V_T$;
\item $V_S$ is mapped onto the exterior of the unit  ball  $\pi(V_S) \simeq V_S/\bZ_2$;
\item $V_0$ is mapped onto the  {\it projective  quadric} 
$$  \mathcal Q  = \pi(V_0) \simeq  G/(\bR^+ {\cdot} \Sim(\bR^2) )  \ , $$
which, as we will shortly see, is identifiable with the conformal   sphere. 
\end{itemize}
Any  unit   time-like vector  $e_0$ determines the orthogonal  Euclidean   hyperplane   $e_0^{\perp}$ and the corresponding translated Euclidean hyperplane $E_{(e_0)} \= e_0 + e_0^\perp$.  The  intersection
 $$  S^2_{(e_0)} =  E_{(e_0)} \cap  V _0$$
  is  the   unit  sphere  of the Euclidean space  $E_{(e_0)}$ and it  is diffeomorphic to 
   the   quadric  $Q=\pi(V_0)$.  Indeed, the      map
\beq  \label{2}   \psi^{(e_0)} :  \bR P^3 \setminus \bP (e_0^\perp)   \longrightarrow     E_{(e_0)}      ,\qquad    [p] = \bR p  \longmapsto   \psi^{(e_0)}(p)
\=  \bR p \cap E_{(e_0)} \eeq
is  a diffeomorphism,  which induces   a  diffeomorphism  between $\cQ = \pi(V) \subset \bR P^3$ and  $S^2_{(e_0)} \subset E_{(e_0)}$.  The pull-back of the round metric $g_o$ of $S^2 = S^2_{(e_0)}$  by such diffeomorphism is  a    Riemannian metric  $g^{(e_0)}$   of constant  curvature  $1$  on  $\cQ =
\pi(V_0) $.    Since   replacing  the unit   time-like vector $e_0$   by  a different  unit time-like vector  $e'_0$ determines a new diffeomorphism $\psi^{(e'_0)}$ and a  metric $g^{(e_0')}$ on $\cQ$, which is  conformally  equivalent to   $g^{(e_0)}$, 
the  conformal structure  $[g^{(e_0)}]$ on $\cQ = \pi(V_0)$, determined  by  the metric  $g^{(e_0)}$ is independent of the choice of $e_0$.  The group $G =\SO^o(1,3)$ acts on $\cQ  \simeq S^2$ as the
connected M\"obius   group  $ \operatorname{Conf}^o(S^2)$ of the conformal transformations of $S^2$.
\par

\subsection{The $\cQ$-correlation   between projective  points  and projective    planes}\label{subsection  5.3.}
   The  Lorentz metric \eqref{lorentz} determines   a one-to-one correspondence
    between the projective  points and the projective planes of  $\bR P^3$ defined by 
    \beq \label{2*}  \bR P^3 \ni [v] = \bR v \qquad\overset{1:1} \longleftrightarrow\qquad  \Pi_v:=  \bP(v^{\perp}). \eeq
  It  is called    {\it correlation  with respect
   to   $\cQ$}. 
  Under this bijection, the three  types  of  projective points $[v]$,   determined by  the  {\it timelike}, {\it
  spacelike} and {\it lightlike} vectors  $v \in \bR^{1,3}$, respectively, correspond  to three kinds of  projective planes,
  namely those not intersecting  $\cQ$,    those   that are secant to  $\cQ$ and those that are  tangent  to  $\cQ$:
 \begin{align}
\label{111}  &[n] = \bR n \in   \pi( V_T) &  \longleftrightarrow\qquad&\Pi_n   \ \text{such that}  \ \  \Pi_n \cap  \cQ  = \emptyset	
 \ ;    \\
 &[m] = \bR m \in \pi( V_S)& \longleftrightarrow \qquad & \Pi_m \  \text{such that} \ \  \Pi_m \cap  \cQ  =   S^1\ ;\\
& [p] = \bR p \in  \pi(V_0) & \longleftrightarrow \qquad & \Pi_p \ \text{such that}\ \  \Pi_p \cap \cQ = [p] \ .
 \end{align}
Note that   \eqref{2} establishes also  a bijection between  the affine planes in $E^3_{(e_0)} \simeq \bR^3$  and  the  projective planes  that are different from
$\pi(e^\perp_0)$. Under this correspondence, the affine planes of $\bR^3$ which do
not intersect $S^2$ are in bijection with  the
projective planes  $\Pi_n \neq \pi(e_0^\perp)$ with  $\Pi_n \cap  \cQ  = \emptyset$, i.e. those corresponding to the lines   $[n] \in \pi(V_T)$, $[n] \neq [e_0]$.
\par
\smallskip
 The  following lemma is well known. \par
\smallskip
\begin{lemma} \label{lemma1} The stability  subgroup $G_{[p]} = \Sim(\bR^2)$ of  a point $[p] \in \cQ = S^2$ acts
transitivity on the orbit $V_T$  and  hence   on the set
 $\big\{\Pi_n,\, [n] \in \pi(V_T) \big\}$  of  the   planes of $\bR P^3$  not intersecting   $\cQ$ (corresponding to   the affine planes of $\bR^3$ not intersecting $S^2$).
 \end{lemma}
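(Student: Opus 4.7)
The plan is to identify $V_T$ with hyperbolic $3$-space $G/\SO(3)$ (the Lobachevsky model appearing earlier in the paper) and to recognise $G_{[p]}$ as a maximal parabolic subgroup of $G = \SO^o(1,3)$; the transitivity will then follow from the Iwasawa decomposition of $G$. First I would choose an auxiliary null vector $p' \in V_0$ with $g(p, p') = -\tfrac{1}{2}$, so that $e_0 := p + p' \in V_T$, and complete $\{p, p'\}$ to a Witt basis $\{p, p', e_2, e_3\}$ of $\bR^{1,3}$ with $\operatorname{span}(e_2, e_3) = \operatorname{span}(p, p')^\perp$.

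In this basis the stabiliser $G_{[p]}$ admits the Langlands-type decomposition $G_{[p]} = MAN$ where: $A \simeq \bR^+$ is the one-parameter group of boosts $p \mapsto \lambda p$, $p' \mapsto \lambda^{-1} p'$ fixing $e_2, e_3$; $M = \SO(2)$ consists of rotations of the spacelike plane $\operatorname{span}(e_2, e_3)$; and $N \simeq \bR^2$ is the abelian group of null rotations fixing $p$ and translating $p'$ along $\operatorname{span}(e_2, e_3)$. A direct dimension and commutator check confirms $MA \simeq \CO(2)$ and $G_{[p]} = MA \cdot N \simeq \Sim(\bR^2)$, which matches the identification already stated in the paper.

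Next I would invoke the Iwasawa decomposition $G = KAN = NAK$ of the rank-one semisimple group $G = \SO^o(1,3)$ with maximal compact $K = \SO(3) = G_{e_0}$. Any $g \in G$ factors as $g = nak$ with $k \in K = G_{e_0}$, and hence for an arbitrary $v = g \cdot e_0 \in V_T$ one has $v = (na) \cdot e_0$. Thus the subgroup $NA \subset G_{[p]}$ already acts transitively on $V_T$, and a fortiori so does $G_{[p]}$. The transitivity on the family of projective planes $\{\Pi_n : [n] \in \pi(V_T)\}$ is then immediate, because the correlation $[n] \leftrightarrow \Pi_n = \bP(n^\perp)$ is defined from the Lorentz scalar product alone and is therefore $G$-equivariant, so the induced action of $G_{[p]}$ on projective planes inherits the transitivity from its action on $V_T$.

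The only nontrivial ingredient is the concrete identification of the abstract Iwasawa factors $A$, $N$, $K$ with the geometric subgroups of boosts, null rotations, and the isotropy $\SO(3)$ of $e_0$; once this is in place the proof is essentially a one-line consequence of the Iwasawa decomposition. Should one prefer to avoid invoking Iwasawa as a black box, the same conclusion can be reached by an explicit reduction in the Witt basis: expand an arbitrary $v \in V_T$ in coordinates, apply a rotation in $M$ to annihilate the $e_3$-component, two null translations in $N$ to annihilate the $e_2$- and $p'$-components transverse to $\bR p$, and finally a boost in $A$ to bring $v$ to $e_0$. This is elementary but computationally heavier; for a survey-style paper such as this one, the Iwasawa route is shorter and more conceptual.
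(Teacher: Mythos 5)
The paper itself offers no proof of this lemma --- it is stated as ``well known'' and left at that --- so there is nothing internal to compare against; your argument supplies the missing justification and it is correct. The route you take (identify $G_{[p]}$ as the minimal parabolic $MAN$ of $\SO^o(1,3)$ with $MA\simeq \CO(2)$ and $N\simeq\bR^2$, then read off transitivity of $NA\subset G_{[p]}$ on $V_T=G\cdot e_0$ from the Iwasawa decomposition $G=NAK$ with $K=G_{e_0}=\SO(3)$) is the standard conceptual proof of exactly this fact, and the passage from transitivity on $V_T$ to transitivity on the planes $\Pi_n=\bP(n^\perp)$ via $G$-equivariance of the correlation is airtight (each line in $\pi(V_T)$ contains a unique future unit timelike vector, so transitivity on the hyperboloid does give transitivity on $\pi(V_T)$). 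One small slip in your optional ``explicit'' alternative: the null rotations in $N$ fix $p$ and hence cannot alter the $p'$-coefficient of $v$ (that coefficient is $-2g(v,p)$, an $N$-invariant), so $N$ should only be used to kill the $e_2$- and $e_3$-components; the surviving $p$- and $p'$-components, whose product is forced to be $1$ by $g(v,v)=-1$, are then normalised by the boost in $A$. With that correction the elementary reduction goes through as well.
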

\par
\smallskip

\section{Information  processing  in  dynamics} \label{sect:6}

\subsection{Shifts of   receptive fields   and pre-saccadic  remapping   }\label{section  6.1.}
    In a seminal paper, J. Duhamel et al. \cite{D-C-G} showed    {\it  the existence of  anticipated  shifts of    receptive
    fields} for    visual neurons   in   the lateral intraparietal area (LIP)  of   a macaque.  The same   phenomenon
    was later detected    for   neurons in  many other  visual systems, including V1, V2, V3, etc. (see
    \cite{M-C,W-W,Co}).  \par
   The  shifts of receptive fields (RF)  found  by Duhamel and collaborators can be shortly described as follows. Assume that,  before a saccade,   a neuron $n$  with   RF  $z_n$   detects  the   retinal  image
    $\overline A$ of  an  external  point  $A$  (that is, the image   $\overline A$ is  in 
$z_n$)  and  that,  after   the saccade, the same neuron detects   the   retinal image   $\overline B$
of  another point   $B$ (i.e.  in  $z_n$ there is now $\overline B$).   Experiments give evidences that   the  neuron $n$
gets   information    about   $B$ {\it approximately 80  ms  before  the  beginning of the saccade},  that is when the retinal  image
$\overline B$  is not  yet  at   $z_n$, but  at some different point $\wt z$, which is  the  RF   of   some different  neuron $\wt n$.
In other words, the neuron $n$ reacts {\it as if \underline{before} the beginning of the  saccade} its  RF field $z_n$  is changed  into the RF  $z_{\wt
n}$ of  some other  neuron. 
 Due to this,   the phenomenon is  named  {\it  pre-saccadic shift  of a  RF} (see e.g. \cite{M-C,W-W}). 
 \par
   This name  is   somehow
   misleading. In fact, each  visual   neuron has a physiologically   {\it fixed} RF  -  it is a  domain  in  the retina  which is
   physiologically connected  with the neuron. 
 On the other hand,   the eye movement which occurs during a  saccade produces  a shift of all   {\it retinal images} of the external points.  Due to this, if 
    a neuron $n$  detects  a  retinal image $\overline{A}$ of an external point $A$ at its RF  $z_{n} $ before the beginning of a saccade, 
    after  the saccade  the same neuron should  detects  a new  retinal image in the RF  $z_n$ ,  namely  the   image $\overline B_n$ of a 
point $B \neq A$.  In particular, the  detection of $\overline B$ by the neuron $n$ should occur {\it after} (and {\it not before}) the saccade. Now the  ``pre-saccadic  shift  of  the RF of $n$''   
is the name for the unexpected phenomenon that   the neuron
$n$ reacts to the stimulus  $\overline{B}$ earlier than when it should be,  that is when $\overline B$ is not yet  in  $z_n$. \par
 For a long  time, it  was suggested (see  for example \cite{M-B}) that  one of  the  neurons $\tilde{n}$, having $\overline B$ in its RF,    sends  
 information  about  $\overline{B}$   to the neuron $n$ before $\overline B$ reaches the RF of $n$. However, as it was remarked  by
 M.\ Zirnsac and T.\ Moore  \cite{Z-M}, this  explanation  looks not very  realistic, since it assumes  the existence of  a huge
 number of horizontal connections between neurons.  Zirnsac and Moore offered   a different and more plausible explanation  which  may be briefly
 summarised  as follows.
 The  global information   about the stimulus   before   the saccade     is integrated in  some  higher   order
 visual subsystem.  On the basis of  this  global  information,   the  visual system chooses  a   gaze  target  for
 a    saccade  and sends  the   information  about   the retinal image  $\overline{B}$  (detected by a neuron $\wt n$) 
  to  the neuron $n$ before the saccade starts.  After  the saccade,   the  retinal  image $\overline{B}$ does come  to  the RF $z_n$ of
   $n$ and  the preliminary information that was sent to  $n$  is  corrected.\par
\smallskip
Following  Zirnsac and Moore's idea,  it is sensible  to  believe that  also  the  microsaccades are not  random phenomena, but  controlled by higher sections of the visual system, while the drift 
is just a random process during which
 the visual cortex and the oculomotor system get  stochastic  information on the stimulus.  We think that a microsaccade
 $\text{Sac}(A,B)$  from a  gaze direction  $A $ to a  gaze direction $B$    serves to   change     the  retinotopic
 coordinates   associated  with the  initial direction $A$ to   new  retinotopic   coordinates  associated    with the
final  direction $B$. This process is called {\it remapping}. The stochastic  information about  the   stimulus   during the
 drift  after the saccade $\text{Sac}(A,B)$  is  encoded  into  the  coordinates associated with  the final gaze $B$, see
 \cite{A1}.
\par
\smallskip

\subsection{Gombrich's  Etcetera Principle  and a conjecture on the remapping}
The art historian E. Gombrich formulated  the   following  important
idea, called {\it Etcetera Principle}, about the mechanism of   the saccadic  remapping  \cite{G,D-C-G}:\\[5pt]
``The   global pattern in environments such as a forest, beach or
street scene enables us to predict more-or-less what we will
see, based on the order and redundancy in the scene and on
previous experience with that type of environment. Only  a  few  $3-4$ salient  stimuli  are contained in  the
trans-saccadic  visual memory  and  update".\\[5pt]
\indent
We now   want   to    state a conjecture   about the conformal  character  of  the remapping, which  supports  the Etcetera
Principle. But for  this,  we first need to  recall  a few fundamental facts on conformal geometry of the plane and of the
sphere and on
 the  M\"obius  projective  model of  the conformal sphere.
\par
\smallskip
As in \S \ref{subsect:5.2}, let us identify  the hyperplane $E^3_{(e_0)} = e_0 + e_0^\perp  \subset \bR^{1,3}$ with  the physical $3$-dimensional
Euclidean space $E^3 = \bR^3$. Let also identify   $S^2  = S^2_{(e_0)} \subset E^3_{(e_0)}  = \bR^3$  with the eye   sphere
$S^2_{\text{eye}}$.  As we pointed out  in \S \ref{subsection  5.3.},  the diffeomorphism \eqref{2}
  determines   identifications  between $\bR P^3 \setminus \pi(e_0^\perp)$ and 
$E^3_{(e_0)} = E_3$ and  between the   projective planes   $ \Pi_n$,   $[n] \neq [e_0]$,   defined in  \eqref{111},  and  the  affine planes  $\Pi \subset E^3_{(e_0)} = E^3$ that  are    external to $S^2 = S^2_{\text{eye}} $.
  \par
  \smallskip
 Consider now an affine plane $\Pi \simeq \Pi_n$ external to $S^2_{\text{eye}}$  and the corresponding   central projection   $\varphi_\cN
 : \Pi  \to S^2_{\text{eye}} $  defined in \eqref{central}. By the map \eqref{2*}, such central projection  corresponds to  a map
 $\varphi_\cN : \Pi_n  \to \cQ$.
A  physical rotation  $R^\a_{\underline v}$ of the eye,  through the angle $\a$  around the   axis    $\bR\underline v$, 
determines   changes of  the images  of the external surfaces. In fact,  if they are expressed  in terms of the    {\it retinotopic   coordinates} (=  coordinates that are   fixed to  the eye sphere  $ S^2_{\text{eye}}$), the  images of   the external surfaces are    transformed under the  inverse rotation   $(R^{\a}_{\underline v})^{-1}  = R^{-\a}_{\underline v}$. In particular,  from the point
of view of retinotopic coordinates,  if the eye rotates by $R^\a_{\underline v}$, the image of an  external plane  $ \Pi = \Pi_{n}$  changes  into   the image of  the  rotated plane  $\Pi' =
R^{-\a}_{\underline v}(\Pi_{n})$.
The  {\it  problem  of remapping}  is the problem of  determining a bijection between the points of  $\Pi$ and  $\Pi'$  which induces  a
simple relation between  the points of the    retina  images  $\varphi_\cN(\Pi)$ and $\varphi_\cN(\Pi')$.
\par
\smallskip
According to  Lemma \ref{lemma1},  there  is  a Lorentz  transformation  $L_{\Pi, \Pi'}$  of  the stability subgroup
$G_{\cF} \subset G = \SO^o(1,3) $ of    the  fovea $\cF \in S^2_{\text{eye}} = \cQ $, which sends  $\Pi$ to   $\Pi'$.
This leads to the following:\\[3pt]
{\bf Conjecture.} {\it   The  brain   identifies    the points of $\Pi$ with the points of  $ \Pi'$   using  the
     Lorentz transformation $L = L_{\Pi, \Pi'}$, so that   the points of the  corresponding   retina images $
     \varphi_\cN (\Pi)$, $\varphi_\cN (\Pi')$ on   $S^2_{\text{eye}} = \cQ$   are  related   by    the  associated
     conformal transformation  $L\vert_{\cQ}$ of $S^2_{\text{eye}} = \cQ$ (see \S \ref{subsect:5.2}).}\\[3pt]
 \indent 
 We remark that any  conformal transformation of $\cQ = S^2$   is  completely  determined by  the  images of just  three points of the sphere under  the
 transformation. This means   that   the   images after  remapping   of just  three
 retinal points  are  sufficient  to completely  reconstruct the global post-saccade retinal image,   as  the Gombrich
 Etcetera Principle states.\par
Note also  that  if   the  angle $\alpha$ of  the rotation $R^\a_{\underline v}$ is  small  (as it occurs  for  the
microsaccades and other  fixational eyes movements),  then  the   Lorentz   transformation  $L = L_{\Pi, \Pi'}$  is  close
to  the rotation     $R_{\underline v}^{- \alpha}$.\par
\smallskip
\subsection{Consequences for  the Alhazen Visual Stability  Problem}\label{6.4}
The   {\it visual  stability  problem}    consists in finding  an   explanation    of  how  the brain  perceives   the  stable
objects   as  ``stable''  in  spite
of  the changes  of their   retina  images,   caused  by  the  eye  movements.  This  problem     was  first
formulated   in  the eleventh   century  by the  Persian scholar Abu 'Ali   al-Hasan ibn al-Hasan ibn   al-Haytham
(latinised, Alhazen)
and, since then,  it was   discussed  by several  scientists, as  R. Descartes, H. von Helmholtz, E. Mach,  C. Sherrington and
many others (see e.g. \cite{W,W-J-B}).\par
\smallskip
According to the  above conjecture,  the pre-saccade remapping   acts on  the retina  images  as a conformal
transformation.  This    establishes a relation between    the   visual stability  problem for  contours   with    the classical mathematical
problem of  Conformal Geometry   of  {\it   characterising   the   curves of  the  conformal  sphere  up  to
conformal transformations}  (that is,  the conformal
 version   of  the Frenet  Theory of  curves of the Euclidean space).  This mathematical problem   is   completely solved  by means of 
  several results by   A. Fialkov, R. Sulanke, C. Sharp, A. Shelechov    and many others   (see e.g. \cite{F} and references therein). V. Lychagin
 and N. Konovenko   recently gave  a very general  and   elegant  solution to  this   problem   in terms of
 differential  invariants in   \cite{L-K}.\par
\medskip
\section{Differential geometric  models of the  V1 cortex}\label{section 7}
   \subsection{Geometry of  the projectivised  tangent   and   cotangent   bundles of
   $2$-di\-men\-sion\-al  manifolds}
   \subsubsection{The projectivised  tangent   and   cotangent   bundles  of an $n$-manifold}
   Let $M$  be   a manifold of dimension $n \geq 2$  and $\tau : TM \to M$ (resp.,   $\tau' : T^*M \to M$) its  tangent (resp.
   cotangent) bundle. Changing   the  fibers $T_zM$ (resp. $T^*_zM$)  into  their projectivisation  $PT_zM$ (resp.
   $PT^*_zM$), one  gets the  projectivised tangent  bundle  $ \pi: PTM \to M$ (resp.  cotangent bundle   $\pi' : PT^*M \to
   M$)  with  fibers  isomorphic to the real projective  space  $\bR P^{n-1}$.\par
   \smallskip
    Let $(x^1, \ldots, x^n)$ be local coordinates   on  $M$,  and denote by $\partial_i = \frac{\partial}{\partial x^i}$   and   $dx^i$ the
    corresponding coordinate
    vector  fields   and    the  dual coordinate  1-forms, respectively.  The  associated local  coordinates  for the tangent
    bundle  $TM$   are $(x^1, \cdots , x^n, v^1, \cdots v^n)$  where we denote by  $v^i$  the components of  the vectors $ v = \sum v^i
    \partial_i
    \in T_xM$ in coordinate frames. The  coordinates  $(v^1, \cdots, v^n)$ may be  considered  as homogeneous  coordinates  for the projective
    spaces  $PT_xM$.  In  the open subsets of $P TM$  where  $v^n \neq 0$, the associated non-homogeneous  coordinates  are
    $$u^1 = \frac{v^1}{v^n},\qquad \ldots\qquad  u^{n-1}= \frac{v^{n-1}}{v^n}.$$
    This   gives   local coordinates  $(x^1, \cdots, x^n, u^1, \cdots , u^{n-1})$  for $PTM$.
     The  projectivised  tangent bundle   $PTM$ is naturally  identified  with   the   space of the 1-jets of  the
     non-para\-meterised  curves  of $M$ and, hence, it may be  also considered as the   space  of  the (first order)
     infinitesimal curves (or  contours) in $M$.      Any   section
      $$ s :M \longrightarrow PTM\ , \qquad  x\overset{s}  \longmapsto [v_x] = \bR v_x \in PT_xM\ ,$$
  defines  a first order ODE in the manifold $M$. The solutions  of  such  equation   are the non-parametrised  curves
  $\gamma$ in $M$  whose  tangent  lines  $[\dot{\gamma}_x]$, $x \in \gamma$,   coincide with the lines $[v_x]$. In  local
  non-homogeneous coordinates  $(x^i, u^j = \frac{v^j}{v^1})$,  the  ODE associated with a section $s: M \to TM$ reduces  to the
      system
    \beq \label{ODE} \frac{d x^j(x^1)}{dx^1} = u^j(x^1, x^2, \cdots, x^n),\qquad  j = 2, \cdots, n\ .    \eeq
    
     Similarly, a set of  local coordinates $(x^i)$ for $M$  defines an associated system of   local  coordinates $(x^i,
     p_j)$  for  the  cotangent bundle  $T^*M$,   where the $p_j$ are the components of the $1$-forms  $p = p_1 dx^1 +
     \cdots + p_n dx^n \in T^*M$.
    In the coordinate  domains  where  $p_n \neq 0$,
     the projectivised  cotangent  bundle  can be  identified  with the  hypersurface  $H = \{ p_n =1\} \subset T^*M$  with
     coordinates
     $$(x^1, \ldots,  x^n, p_1, \ldots, p_{n-1} ).$$
     \par
    The   projectivised  cotangent  bundle  $PT^*M$   is equipped with   the  {\it canonical contact   structure}  (\footnote{We recall that  a {\it contact structure} on a manifold  is  maximally
    non integrable   field of   hyperplanes in the tangent spaces of the manifold.})
    $\cD \subset T(PT^*M)$,   given by the  projectivisation of  the  kernel  distribution  
    $$\ker \lambda \subset
    T(T^*M)$$
      of  the   {\it Liouville  1-form} of the cotangent bundle $\tau': T^*(T^*M) \to T^* M$, i.e  of the $1$-form on $T^*M$ defined by 
    $$\lambda : T^*M \longrightarrow T^*(T^* M)\ ,\qquad  p \longmapsto \lambda_p:=  (\tau')^*p\ .$$
         In   terms of   the  coordinates $(x^i, p_j)$ for  $T^*M$,  the  Liouville $1$-form $\l$   is the map
         $$p = p_i d x^i\longmapsto \lambda_p =  p_i dx^i\ .$$
         \par
We remark that the    restriction $\lambda_H = p_1dx^1 + \cdots + p_{n-1} dx^{n-1} + dx^n$
       of the  Liouville  form to the hypersurface $H = \{ p_n =1\} \subset T^* M$ is  a  contact $1$-form on $H$  and  defines the  contact  distribution  $\cD_H =
       \ker \lambda_H  \subset TH $,  spanned by the vector fields
        \beq \label{222}  \p_{p_j}, \qquad    \p_{x_j} - p_j \p_{x_n},\qquad  j=1, \ldots, n-1\ . \eeq\par
        \smallskip
         \subsubsection{Special properties   of   $PT M$ and $PT^*M$  when   $\dim M = 2$}
      Let us now   assume   that   $M$ is  a surface, i.e.   $\dim M  = 2$.  In this case,  the projectivised  tangent  and
      cotangent bundles have a canonical identification
        $$PTM = P T^*M$$
        determined as follows:  {\it any line $[v] =  \bR v   \in PTM$, generated  by a vector  $0 \neq v \in
        T_xM$,   can be naturally identified  with  the  line
    $[p_v] \in  PT_x^*M$, given   by the  annihilator $0 \neq p_v\in  T^*_xM$  of  $\bR v$. Vice versa,
    any line  $[p] \in P T^*_xM$  can be   identified with  the line    $[v_p]\in  T_xM$ if  the vectors   $0 \neq v_p \in \ker p$}.
    \par
      \smallskip
    Let  $(x,y)$ be  local coordinates  on $M$ and  $ (x,y, p_1,p_2)$ the  associated
      coordinates  for     covectors $p = p_1 dx + p_2 dy$. As we observed above,  the  open subset      $\{ p_2 \neq 0\}  \subset PT^*M$  can be naturally  identified with the
       hypersurface $H:=\{p_2 =1\}$ of $T^*M$.  This hypersurface has  coordinates  $(x,y, p =-p_1)$  and  consists  of  the covectors of the form 
        $\eta= dy - p dx \in T^*_{(x,y)}M$. The naturally corresponding
       elements of $P TM$  are the lines  of  (annihilating) vectors
       $[v_\eta] = [\p_x + p \p_y] \in  PTM$.
 Reducing \eqref{222}  and \eqref{ODE} to the case $n = 2$, we get that  the contact   distribution
$\cD_H = \ker \lambda|_H $
 is   spanned  at each point  by the pair of  vector  fields 
             $\p_p$ and  $\p_{x} + p \p_{y}$.
        and that  a  section
           \beq \label{fieldorientations} s : M \to PT^*M = PTM, \qquad  (x,y) \longmapsto [v_{(x,y)}]=[ v^1 \p_x + v^2 \p_y],          \eeq
      is associated with  a first  order ODE, which,   in a   domain  where  $v^1 \neq 0$,  has the form
           \beq \label{444}    \frac{ dy}{dx}\bigg\vert_x  =  p(x,y)\ ,\qquad \text{where}\qquad  p(x,y) \= \frac{v^2(x,y)}{v^1(x,y)}  =  \tan \theta_{(x, y)} \ ,\eeq
      where       $\theta_{(x,y)} \in \left[-\frac{\pi}{2}, \frac{\pi}{2}\right)$   is     the {\it orientation} of the covector $p(x,y)$: It is  the  angle between  the  coordinate
     direction  $\p_x$ and  the  line  $\bR\, v(x,y)$, determined by   any  conformally flat metric
      $g = \lambda(x,y)(dx^2 + dy^2)$. Due to this, the sections \eqref{fieldorientations} are called {\it fields of orientations}.\par
\par
\smallskip
       We now recall that any (smooth)   function $F$ on $M$ determines  the section  $dF: M \to T^*M$
       defined  by
       $$dF (x,y) \=\left( \frac{\p F}{\p x} dx + \frac{\p F}{\p y} dy\right)\bigg\vert_{(x,y)} \ .$$
The    complement $M' \subset M \setminus\{ (x,y)\ :\ dF(x,y) = 0\} $ of the critical point set is an open subset,    on which the section $dF: M' \to
 T^*M'$  determines     a  section 
 $$ [dF]: M'  \to PT^*M' = PTM'$$
 of the projectivised cotangent bundle.   Under the above described identification between  $PT^*M'$
and the hypersurface $ H = \{p_2 = 1\} \subset T^* M'$,  such a section $[dF]$ has the form
          $$[dF] =  dy -p(x,y)dx  \qquad \text{with}\qquad p(x,y) = - \frac{\frac{\p F}{\p x}}{\frac{\p F}{\p y}}\ .$$
The   associated    field  of  orientations  $[\p_x + p(x,y)\p_y]$ in the projectivised  tangent bundle $P T M'$  defines the  ODE
        \begin{equation} \label{lagrang} \frac{dy}{dx}= p(x,y)\ ,\qquad p(x,y) = - \frac{\frac{\p F}{\p x}}{\frac{\p F}{\p y}} \end{equation}
Since the vector  field $Z = \p_x + p(x,y)\p_y$ is such that $Z(F(x,y)) \equiv 0$, the  images of the integral curves of the ODE \eqref{lagrang}  are the 
level sets of  the   function $F(x,y)$,  i.e. they are the  {\it contours} determined by  $F$.\par
\smallskip
    \subsubsection{The relation between the projectivize bundle and the circle bundle    of a
  surface}
    Let $g$ be a Riemannian metric on a  surface $M$     and denote by 
    $S(M) = S(M,g) \to  M$  the corresponding   $S^1$-bundle,  given by the unit   vectors 
    $$S(M)= \{\ v \in TM\ ,\  g(v, v)=1\  \} \subset TM \ , $$
    or,
    equivalently,  by the $S^1$-bundle  of the unit  covectors $\xi = g(v, {\cdot})$
    $$S(M)= \{\ \xi \in T^*M\ ,\  g^{-1}(\xi, \xi)=1\  \} \subset T^*M\ .$$
  The  total space $S(M)$ of this bundle can be identified with the  space  of the    infinitesimal  {\it oriented}
  contours and, as the projectivised tangent bundle $P TM$,   is naturally  equipped with    a     contact   structure,  i.e. by the kernel distribution of the  restriction $ \l\vert_{S(M)}$ to $S(M)$
 of   the Liouville  form  $\lambda$  of $T^*M$.\par
   The   total space of the $S^1$-bundle  $S(M) \to M$    is  a  $\bZ_2$-covering   of  the total space of    $PT^*M \to
   M$,   with   projection $\wt \pi: S(M) \to PT^*M$  given by the map that   identifies    opposite  (co)vectors. Local
   coordinates $(x,y)$ of $M$  define   local coordinates $(x,y, \wh\theta)$ for $S(M)$,
where for any vector $v \in S_x(M) \subset T_x M$ we denote by  $\wh \theta \in [-\pi, \pi) $   the  angle  between  $v$ and the  positive  coordinate axis $0 x$. \par
 Note that any  simple smooth   curve $\gamma $ in $M$
 has two natural orientations,  i.e   two  different  ways of moving along the curve.  Each orientation  determines  a
 canonical natural parametrisation $\gamma(t)$ by  the arc length  $t$  and a corresponding  canonical  lift  to  $S(M)$\par
 $$ \gamma(t) \longmapsto  \dot \gamma(t) =  (\gamma(t), \wh{\theta}(t)) .$$
\smallskip
\subsubsection{On the notion   of ``orientation''}   The standard fiber of both   bundles  $   S(M) \to M$  and    $PTM \to M$  is $S^1 \simeq \bP R^1$ and the two  bundles are each other locally isomorphic  by means of the   
local diffeomorphism
$$  \left(x,y, \wh \theta\right) \to  \left(x,y,  \theta= \frac{\wh \theta} 2 \right)\ .   $$
In  many papers  (see e.g. \cite{B-C,S-C-P}),   the   angle  coordinate $\wh \theta \in
[-\pi, \pi)$  of $S(M)$ is called   ``orientation'', but it  is of course different from the  above defined orientation $\theta  \in
\left[-\frac{\pi}{2}, \frac{\pi}{2} \right)$ of the  covectors in  $P TM$.  We do not follow this  habit and, 
through the rest of  the paper,   we  will  use  the name ``orientation''  just  for  the  above defined coordinate   
$\theta$  of $PTM$.\par
       \smallskip
       \subsection{Hoffman's  pioneering model   of  the   V1 cortex} \label{hoffman}
 The  first  attempt  to    develop  a  differential geometric model for  the  primary visual cortex  V1    appeared  in a pioneering  and  very stimulating  paper  by W.  Hoffman  \cite{Hof}. In particular, among other  important ideas,  for the first time Hoffmann pointed out  the crucial  role of the M\"obius conformal group  $\SO(1,3)$  in descriptions of  the functional structure of the primary visual cortex. 
 \par
  Hoffman  proposed a model,  in which       the
 V1  cortex  is  mathematically represented   as  the  $3$-dimensional  total space  $V = \mathcal{C} R$    of a  fiber bundle
  $$  \pi : V = \mathcal{C} R \to  R$$
  over  the  retina  $R$,   whose   fibers  correspond to   the columns  of   the  V1 cortex.  According to this model, the simple neurons of   a column  form an
   {\it orientation response field}  (ORF) and    the collection of all such ORFs 
   determine  lifts of  retinal  contours to  curves in  $V$.  The  tangent  vectors  to  these lifts   define
     a contact structure     on  $V$. 
In Hoffman's words, 
\begin{itemize}[leftmargin = 5pt]
\item[] ``{\it the thing one first thinks of is that the visual contours are integral curves
of the cortical vector field embodied in the ORFs. In other words, the visual
map is a tangent bundle  $  T\mathcal{C}R = V  \to   R$, where    $ R$ denotes the retinal
manifold and V the cortical “manifold of perceptual consciousness.” But the
ORFs unfortunately do line up head to tail as in an Euler line approximation
to an integral curve. Furthermore, the ORFs have an areal character as well
as a line-element one.}''
\end{itemize}
\par
\smallskip
Unfortunately,   in   \cite{H}  diverse mathematical errors and inaccuracies occurred. For instance, it was   stated that   the  V1 cortex  should be identified with    the tangent bundle  $T { R}$ or   cotangent  bundle $T^* { R}$ of the retina,  while on the contrary it is now known that it should be identified with  the projectivisations of these bundles.  It   is also claimed  that the  conformal group $\SO(1,3)$  acts transitively on the  contact bundle, but this is not true. 
Other   faults of this kind  appeared.  \par
However, despite of these problems, there is no doubt that Hoffmann's   ideas promoted and strongly  influenced  all  subsequent developments  of differential  geometric models for  the visual system.   Following a suggestion of J. Petitot, the area of applied geometry which originated from Hoffman's work  is nowadays  called {\it neurogeometry}.\par
\smallskip
\subsection{Petitot's  contact models}\label{subsection  7.2}
After  Hoffman's  pioneering model,    in a paper by J. Petitot and   Y. Tondut   and in several subsequent works of J. Petitot 
 a  new precise  contact  model   for the  V1 cortex  and  laid the foundations of neurogeometry was  developed  systematically and in great detail. \par
  \smallskip
 More precisely,  in \cite{P-T, P} (see also \cite{P-T, S-C-M,C-S0,S1,B-Z})  it was introduced a model for the V1 cortex, in  which 
   the  retina  $R$   is identified  the Euclidean plane $R
 = \bR^2$  and  the V1  cortex   is identified with   the projectivised cotangent   bundle
 $$\pi:   PT^* \bR^2 =  \bP R^1  \times \bR^2 \simeq S^1 \times \bR^2  \longrightarrow  R=\bR^2\ .$$
  According  to  Petitot's model, each   fiber  $\pi^{-1}(z)$, $z \in \bR^2$,  of  the bundle $P T^*\bR^2$   corresponds to  a   pinwheel with RF  $z \in R = \bR^2$, while 
   each point $(z, \theta)$ of  a  fiber  $\pi^{-1}(z)$ corresponds to a  single simple neuron  of  a  pinwheel. ln this way, the simple neurons of the pinwheels  of the V1 cortex   are parameterised by  their receptive fields $z \in R = \bR^2$ and their  orientations $\theta$.
 In such a model,  given   a  (regularised)  input  function  $I(x,y)$  on retina,  a simple neuron   with receptive field $z_o$ and  
       orientation $\theta_o$   fires   when  there is a  retinal  contour $C  = \{I(x,y)= \text{const.} \}$  that passes through   $z_o = (x_o, y_o)$ and  with tangent line at $z_o$  with 
       orientation  $ \theta_o$.  \par
       \smallskip
       We recall that    a retinal
    contour $C = \{I(x,y)= \text{const.} \}$ is   a non-parametrised  curve. For such a curve we  may  always  choose a   parametrisation,  say $\g(t) =
    (x(t), y(t))$, $t \in (a,b)$.  If we    assume that  $\dot{x}(t) \neq 0$, we may  also  change the parameter from  $t$   to
    $x$ and   represent   the  contour $C$
      as the image of a parameterised  curve of the form  $z(x)=(x, y(x))$.
     Denoting   by  $x \mapsto\theta_{x} \in \left[- \frac{\pi}{2}, \frac{\pi}{2}\right) $  the map which gives  the  orientations of the tangent line of the curve for any $x$, as we pointed  in \eqref{444}, the function   $y(x)$
     is a solution to   the  ODE
     \begin{equation}\label{ODE-1}
     \frac{dy(x)}{dx} = \tan \theta_{x}.
     \end{equation}
     Therefore,  given a contour $C $,   the  curve $c(x) =  (x,y(x), \theta_x)$, given by  the fired   neurons,  is  a curve in  $PT^* \bR^2$ (= the V1 cortex), which is a lift of the curve $(x, y(x)$ and  with   tangent vectors 
         $$\dot c(x) =\p_x + \frac{dy(x)}{dx}\p_y  + \dot \theta_x \p_{\theta} =  \p_x +  \tan \theta_x \p_y  + \dot \theta_x \p_{\theta}\ .$$
   The  curve $ c(x) = (x,y(x), \theta_x)$ is  {\it horizontal}  in the sense that   it is tangent to the contact
        distribution  $\cD_H$  of  $ P T\bR^2  \simeq  H  = \{ v^1 = 1\} \subset T^* \bR^2$ at each point: Indeed, 
       $$ \lambda_H (\dot c(x)) = (dy -  \tan \theta_x dx)(\dot c(x))= \tan \theta_x -   \tan \theta_x   =0\qquad \text{for any}\ x\ . $$
The  curves in  the  cotangent bundles  that are  lifts  of this kind of curves on  the base manifold are  called   {\it Legendrian lifts}.  We may therefore say that in   Petitot's model  {\it the firing of the simple neurons in the pinwheels of the V1 cortex carry out   the Legendrian lifts
of  the contours}. \par
\smallskip
One of the  main  aims of the  processing of the visual system  is  to  integrate the  information, which is encoded in
the  firing of the  neurons,  and obtain  in this way  a  global description of  the  contours, that is of the curves   that are solutions to the
differential equation  (\ref{ODE-1}).\par
  \smallskip
Note that the  most important retinal contours  are  {\it the  closed  contours,   which  constitute   the  boundaries  of  retinal
images  of   three dimensional  objects} (see \cite{Ka,K-I-G-W}).  Any closed contour divides the  retina  into two
parts:  one  of them  is  the  `` image  of an object'', the other  is  the background on which  the  object is located.
The   determination   of which the two parts is  the ``image of an object''   is  mathematically equivalent to  fixing  an orientation of the contour, that is  
to   choosing  one of the  two possible    directions for   a movement  along the  contour: Once a region is considered as  ``image of an object'',  there is a unique orientation for the contour
 such that  the ``image'' is  on its left  with respect to  the orientation and, conversely, once an orientation is fixed, there is only one  region  that  lies on  the left of the contour with respect to the orientation and thus  only one region which is considerable as  the   ``image of an object''.\par 
\smallskip
  It  would be  interesting   to understand  where   and  how the  visual system   determines  the orientations of  the
  contours,  that is   how it    lifts     curves of the {\it non-oriented} infinitesimal contours of   $PT^*\bR^2$  into
  curves  of   {\it oriented} infinitesimal contours    in  $S(\bR^2)$.   Optical illusions
 show  that   when  both  sides  of  a  closed contour   admit  meaningful interpretations,   the visual
 system involuntarily and periodically    changes    the  orientation  of the  contour. This indicates  that  the orientations
 of  contours are  determined  by the   high levels of   the visual system.\par
 \medskip
\subsection{Sarti, Citti and Petitot's symplectic  model}\label{subsection 7.3.}
In  \cite{S-C-P}, the  authors  proposed   a symplectisation of  Petitot's  contact model,  the so called  {\it  symplectic model}. According to such a   model, 
 each    simple  cell  of the V1 cortex  is  characterised  not only by its RF $z$ in the retina  (which, as in previous Petitot's  contact model, is identified with $ \bR^2$) and  its  orientation  $\theta$, but also 
  by a new  parameter $\s$,  which is called
{\it scaling} and corresponds to  the intensity of    the  reply  of the neuron to a   stimulus.
  In more detail, for  the symplectic model, 
 the  V1 cortex is   represented by the  total space of a principal bundle
 over   $R = \bR^2$
$$\pi : P = \bC^* \times R  \longrightarrow R\ ,$$
 with fiber  given by the   group of the non zero  complex numbers  $\bC^* = S^1 \times \bR^+ = \{\s e^{i \theta}  \}$.
Such a  $\bC^*$-bundle $P$ can be  also identified with either of the   following  manifolds: 
\begin{itemize}
\item the group $P = G$ of similarities of $\bR^2$
 $$  G:= \Sim(\bR^2) =\left(\bR^+ {\cdot}\SO(2)\right)   \ltimes T_{\bR^2}  = \bC^* \ltimes \bR^2 ,$$
  where  $T_{\bR^2} = T_{\bC}$ denotes  the group of  the parallel translations of $\bR^2$
  \item the  cotangent  bundle   with the   zero section removed  $T^*_{\sharp} R$ of the retina $R = \bR^2$. 
  \end{itemize}
 By the  identification       $P  =  T^*_{\sharp}R $, the  bundle   $P$  is naturally  equipped  with  a   {\it symplectic structure}, namely the non-degenerate closed
 $2$-form
  $\omega = d \lambda$ determined by   the Liouville form $\lambda$ of $ T^*_{\sharp}R$.
In what follows,   we  call    the symplectic bundle $\pi: P  = G =  T_{\sharp}^* R \to R$   the {\it Sarti-Citti-Petitot (SCP)
    bundle}.\par
        \medskip
 The identification between the V1 cortex and the bundle $P = \Sim(\bR^2) \to R = \bR^2$,  given in   Sarti, Citti and Petitot's  symplectc model,  is based on  the one-to-one correspondence between  neurons and points of  $P =  \Sim(\bR^2)$ defined as follows.   Consider the
complex mother Gabor filters $\Gabor_{\g^\bC_0}$, with the  RP    \eqref{mothergaborcomplex} and let  $\left(\Gabor_{\g^+_0},
\Gabor_{\g^-_0}\right)$  be the  real  mother  Gabor filters, whose  RPs are   by the real and imaginary parts of $\g^\bC_0$. Let also  $(n^+_0, n^-_0)$ be a  pair of  simple  neurons of the V1 cortex that work as  the pair of mother  Gabor filters $\left(\Gabor_{\g^+_0},
\Gabor_{\g^-_0}\right)$.  As we explained in \S  \ref{sect3.5},
 any element  $ a {\cdot}  T_{b} \in \bC^* {\cdot} \bC $,   transforms
  the pair of  $(\g_0^+, \g_0^-)$ into  the  pair $(\g_{a, b} ^+, \g^-_{a,b})$ defined in \eqref{tranG}. The symplectic
  model of Sarti, Citti and Petitot  is based on the  following fundamental assumption: \\[5pt]
   {\it  the V1 cortex  consists of pairs of simple neurons  $(n^+_{a,b}, n^-_{a,b})$, in which the neurons  $n^\pm _{a,b}$ work  as  the   Gabor filters
  $(\Gabor_{\g_{a,b}^+}, \Gabor_{\g_{a,b}^-})$   and are  therefore parameterised  by  the elements  the similarity group  $
  P =\Sim(\bR^2) = \bC^* \times \bC$. }
\medskip
 \subsection{A physiological interpretation for the coordinates $(z, \theta, \s)$ in  Sarti, Citti and Petitot's  model: the  Principle of Maximal Selectivity}\label{subsection 7.4}
 In \cite{S-C-P}   (see also  \cite{C-S0}, p. 312),  Sarti, Citti and Petitot  consider the following {\it Principle  of Maximal
 Selectivity}:\\[5 pt]
``{\it The intracortical circuitry is able to filter out all the spurious directions and to strictly keep the direction of
maximum  response of the simple cells}".\\[5 pt]
According to this  principle, for any  input function $I: R \to \bR$, at each point $z = (x, y) \in R$, there is 
  a  neuron with RF $z$   which provides  the maximal output for the stimulus.  According to the symplectic model, 
such a neuron corresponds to a point of the fiber $P|_z = \pi^{-1}(z) \simeq \bC^*$ and it  therefore corresponds to 
 special values    $(\overline \theta_z,  \overline \sigma_z)$ for  the coordinates of the fiber $\bC^* = S^1 \times \bR^+$. 
As a consequence of the Principle of Maximal Selectivity, a stimulus  determines   a section  of the SCP bundle $\pi: P\to R$:
$$\r: R \to P\ ,\qquad \r(z) \= (z, \overline \theta_z, \overline \s_z)$$
given by the neurons with maximal responses. 
 Since $P$ is identified with the cotangent bundle $P = T^*_{\sharp} R$, this  section
 can be  considered  as a 1-form  $\r$ on $R$.  Sarti, Citti and Petitot  proved that  $\r$ is closed and hence that the
 surface   $\r(R)\subset T^*_{\sharp}R$ is Lagrangian.\par
 \smallskip
 By the results in \cite{S-C-P}, given a point  $z$ and an input function $I$, the above defined $1$-form $ \r(z) \= (z,
 \overline \theta_z, \overline \s_z)$  is such that  the real number $\frac{1}{\sqrt{2}} e^{\overline \sigma_{z}}$ is
 essentially equal to the distance of $z$ from the nearest  point of a nearby boundary of a figure (that is a contour  on
 which the gradient  $dI$ is very large) and $\overline \theta_z$ is equal to the orientation of the tangent line of this
 boundary at such  nearest point in the boundary.   This immediately  leads to the following physiological  interpretation for the parameters that distinguish   simple neurons with the same RF: \\[5 pt]
 {\it a neuron with RF $z$ and  corresponding to the pair of parameters $(\theta, \s)$ 
 fires when $\frac{1}{\sqrt{2}} e^{\overline \sigma_{z}}$ is
  equal to the distance of $z$ from the nearest  point of the  boundary of a figure and $\theta$  is the orientation of the tangent line of this
 boundary at such  nearest point}\par
\smallskip
This is a physiological interpretation for the coordinates  $\theta$ and  $\s$  of  the SCP bundle, which is however   based on purely non-local properties of the input function $I: R \to \bR$. In fact, 
according to the such an interpretation, {\it the firing of a simple neuron does not depends on the {\rm infinitesimal} behaviour of $I$ at a RF $z$, but on its behaviour  at {\rm finite} distances from $z$}.
This means that, following Sarti, Citti and Petitot's  discussion, the scaling $\s$ cannot be  considered as an ``internal parameter''  in the sense of   Hubel and Wiesel.\par
\medskip
  \section{Models  of  hypercolumns}\label{section 8}
\subsection{Bressloff and Cowan's spherical model of  a hypercolumn} {\label{subsection  8.1 }
  As we mentioned in the Introduction and in \S \ref{hubell},  according to Hubel and  Wiesel  a hypercolumn is a  collection of  columns of  the V1 cortex,  in
   which  the internal parameters  may take all possible values.
      P. Bressloff   and  J.  Cowan proposed   a {\it Riemannian spherical  model} for the  hypercolumns,
      based on the assumption that  the  internal   parameters are just two: the orientation $\theta$  and  the   spatial  frequency $p$ \cite{B-C,B-C1,B-C2}.\par
      \smallskip They   assumed
      that a  hypercolumn $H$ is a domain in the V1 cortex,     associated  with two  pinwheels $\Sp,\Np$,       corresponding
      to the minimal  and   maximal values  $p_-$,  $p_+$ of the spatial frequency.  According to  the   model, the
      simple  neurons  of the hypercolumn $H$ are  parametrised  by  the orientation $\theta$  and  the {\it
      normalised  spatial frequency} $\phi$,  defined by
       $$\phi \=\pi \frac{\log(p/p_-)}{\log(p_+/p_-)}  \ .$$
 The normalisation is chosen   in such a way that  $\phi$   may assume only values  in the interval
$[0.\pi]$.
For any choice of  $\theta$ and $\phi$, there is a corresponding  simple  neuron  $ n = n(\theta, \phi)$  in $H$, which fires  only  if  a
stimulus occurs  in its RF and   with     orientation normalised  spatial  frequency   $\theta$ and $\phi$. \par
     \medskip
   Bressloff and Cowan proposed  a mathematical  interpretation of  these parameters   as  spherical coordinates of the $2$-sphere $S^2$,  with
   $\theta \in [0, 2\pi)$ corresponding to  the {\it longitude} and $\phi = \phi' + \pi/2 \in [0, \pi]$  to  the  {\it
   polar  angle} or  {\it shifted latitude}  (with  $\phi' = \phi- \pi/2$    {\it latitude} in the usual sense).      The
   shifted  latitudes of the pinwheels  $\Sp,\Np$  are $0, \pi$,  but the   longitude (=orientation) is not defined for them
   -- in fact, the pinwheels  are able to detect contours of {\it any}  orientation.  Following this mathematical interpretation of $\theta$ and $\phi$, Bressloff and Cowan identified an
   hypercolumn $H$  with  the sphere $H_{BC} =S^2$ and the 
   pinwheels $\Sp, \Np$ of $H$   with   the south  and the north pole  of  $H_{BC} = S^2$. Then they 
 used such a clever  model to describe the evolution of the excitation of
visual neurons according to the Wilson-Cowan equation and got many  interesting results.\par
\smallskip
We now recall that the RF  of a   simple neuron  $n$ is very small, so  that it can be  considered  as a single point $z(n)$ of the  retina. In Bressloff and Cowan's spherical model, 
  the map  that sends   each simple   neuron $n$ of  a hypercolumn  $H$ to  its  RF  $z(n)$  is  considered  as  a smooth map between the  surfaces
$$ z :  H_{BC} =S^2 \longrightarrow  R_H :=  z(H_{BC})\ , \qquad   n \longmapsto  z(n) $$
where we denote by  $R_H$ the the region of the retina, which is the receptive field of the whole hypercolumn (i.e. the union of the RF of all its neurons).
This map (or, more precisely, its restriction to some appropriate open set) is assumed to be a diffeomorphism. 
   This implies  that, {\it in Bressloff and Cowan's model, the spherical coordinates 
   $(\theta, \varphi) $ of  $ H_{BC} $ may be  considered  also  as  coordinates for (an open subset of)  the  receptive field    $R_H$ of the hypercolumn $H$ and  cannot   be considered as  internal
   parameters in the sense of Hubel and Wiesel. } Only   for  the neurons in the   pinwheels $\Sp, \Np$   of the hypercolumn  (i.e. those on   which  the
   ``latitude'' $\varphi$ takes extremal values)  the parameter  $\theta$ can be considered as an  ``internal parameter''.
   \par
\medskip

\subsection{The need of improvements for   Sarti, Citti and Petitot's symplectic model and  Bresloff and Cowan's spherical model}
In this short section, we would like to point out  a couple  problems for 
  Sarti, Citti and Petitot's symplectic model and for Bresloff and Cowan's spherical model, which indicate  
the need of some improvements for  those two   models.  A development in this direction is  given by  the   {\it conformal spherical model of a hypercolumn} and the corresponding {\it reduced model},  discussed  in the next subsection.\par
\medskip  
In their  symplectic model, Sarti, Citti and Petitot  describe  the  simple neurons of the V1 cortex   in terms of the   bundle   of
conformal frames
$$\pi:  \Sim(\bR^2)  \longrightarrow     \bR^2 = \Sim(\bR^2)/ \CO(2)$$
over the    retina   $R = \bR^2$.    In this model,  the simple neurons of the V1 cortex appear in   pairs $(n^+_{a,b}, n^-_{a,b})$ and  the neurons  $n^\pm _{a,b}$ of  any such pair  work  as  the  Gabor filter
  $\Gabor_{\g_{a,b}^\pm}$  defined in \S \ref{sect3.5}. They  are  therefore parameterised  by  the elements  $(a, b)$ of the similarity group  $
  \Sim(\bR^2) = \bC^* \times \bC$. Each   fiber  $\pi^{-1}(z)  \simeq  \bC^*$     represents a  column     with the  RF   $z$ and the (pairs of) simple   neurons   
in  such a   column  are  parameterised   by
the  two coordinates of $\bC^* = S^1 \times \bR^+$, the orientation  $\theta$  and    the scaling    $\sigma$.   In particular, according to this model, {\it  the columns  of the V1 cortex represented by the fibers $\pi^{-1}(z)$, $z \in R$, of the SCP bundle  are just the {\rm  pinwheels}:  In fact,   they contain simple neurons that are able to detect any possible orientation $\theta$ for the contours that pass through their receptive fields $z$}.  \par
\smallskip
This  remark shows that, in  the symplectic model,  the V1 cortex  behaves  as a collection of pinwheels with no regular column among them. Since it is known that most of the columns of the V1 cortex are regular,  we think that  this represents    a somehow weak point of the symplectic  model. 
\par
Let us now focus on  Bressloff and Cowan's spherical model. In this model the simple neurons  of a  hypercolumns  $H$ are mathematically represented as points of a sphere $H_{BC} = S^2$ and are distinguished each other by  the two spherical coordinates of $S^2$,   identified with the  orientation
$\theta$ and the   normalised   spatial frequency   $\phi$   of the neurons. But, as we pointed out at the end of  \S \ref{subsection  8.1 }, in Bressloff and Cowan's model, 
the spherical  coordinates $\theta, \phi$   cannot  be considered as internal parameters for the hypercolumn (with the exceptions of the neurons in the two pinwheels $\Sp$, $\Np$ of the hypercolumn). 
In other words, according to  the spherical model,  most of the simple neurons of a hypercolumn are not associated to any value of an internal parameter  in the sense of Hubel and Wiesel. We think that  also this aspect 
  of the spherical
model is a weak point. \par
\medskip
As we will see in the next two  subsections,  the conformal spherical model offers an improvements of Bressloff and Cowan's model, in which new   internal parameters occur and  yields a simplified 
version, which can be considered a  refined  version of Sarti, Citti and Petitot's model of the V1 cortex and suggests an alternative physiological interpretation of the scaling parameter $\s$, based on purely local properties of the input function. 
\par
\medskip
\subsection{The  conformal spherical model  of a hypercolumn and the associated reduced model} \label{subsection  8.2.}
We now    discuss the {\it conformal spherical  model},   shortly announced in  \cite{A1} and presented in     detail in  \cite{A-S}.
\par
\smallskip 
We recall that in   Bressloff and Cowan's Riemannian spherical     model,  a hypercolumn $H$ is mathematically represented    as    a  sphere
    $H_{BC}=S^2$ equipped with  the    standard  round  metric $g_o$.  The  basic new idea introduced in \cite{A} was to  construct an improvement of  such spherical model  replacing  the   standard Riemannian sphere $(S^2, g_o)$  by   the   {\it conformal  sphere}  $(S^2, [g_o])$, i.e. by the $2$-sphere equipped with the conformal structure $[g_o]$ (see \S \ref{section 5} for main definitions and fundamental results). \par
 \smallskip
 Inspired by the  method  of  the construction of    Citti, Sarti and Petitot's symplectic model, the fundamental assumption of the conformal spherical model is that 
 the simple neurons of a  hypercolumn appear in pairs $(n^+_g, n^-_g)$, in which  the neurons $n^\pm_g$  work  as the Gabor filters with 
 RPs $\gamma^\pm_g$ given by the  real and imaginary parts of the   RP 
$$\gamma^{\bC}_ g(x , y ) = \frac{1}{\operatorname{J}(g)(x,y)} \gamma_0^{\bC} (g^{-1}(x, y)) $$
where
\begin{itemize}
\item[--]  $\gamma_0^{\bC} (x, y)$ is  the RP of the   ``mother'' Gabor filter defined in \eqref{mothergaborcomplex}; 
\item[--] $g$ is an element of the connected group $\operatorname{Conf}^o(S^2) \simeq \SO^o(1,3)$ of  conformal transformations of $(S^2, [g_o])$  (see \eqref{8}).
 \end{itemize}
 In other words, the neurons $n^\pm_g$ behave as the Gabor fields, whose  RPs   are the real and imaginary parts of the complex Gabor filter, which is obtained  from the  ``mother'' complex Gabor filter $\Gabor_{\g^\bC_0}$ by changing  the coordinates $(x, y)$ by the conformal transformation $g$ and using the fact  that, under changes of coordinates, the RP profiles transform as densities -- see \eqref{density}.\par
 \medskip
 In this way, according to the conformal spherical model, {\it the (pairs of) simple neurons  of a hypercolumn are parametrized by   the elements of   $\operatorname{Conf}^o(S^2) \simeq \SO^o(1,3)$}.
\par
\smallskip
On the other hand, as we pointed out in \S \ref{Cartanapproach}, the group $ \operatorname{Conf}^o(S^2)$ is the total space of the homogeneous bundle 
\beq \pi:  \operatorname{Conf}^o(S^2) \longrightarrow S^2 =  \operatorname{Conf}^o(S^2)/\Sim(\bR^2)\ ,\label{thebundle} \eeq
 which is in turn identifiable with the total space of the bundle 
$\pi:  \mathcal{CF}^{(2)}(S^2) \longrightarrow S^2$ of the second order conformal frames of $S^2$. This means that, {\it according to the conformal spherical model, a hypercolumn $H$ is identified with the total space
of the bundle \eqref{thebundle} with fiber given by the $4$-dimensional group $\Sim(\bR^2)$}.\par
In this model, the coordinates of the fiber $\Sim(\bR^2)$ are internal parameters for the simple neurons of the hypercolumn $H$. 
{\it If   such   fiber  coordinates are ignored,  one gets that  the neurons of the hypercolumn are parameterised just by the two spherical coordinates of $S^2$ 
and  Bressloff and Cowan's   spherical model is recovered.}\par
\smallskip
On the other hand, the conformal spherical model and  Riemannian spherical model crucially differ in the following aspects: 
\begin{itemize}
\item According to  the Riemannian spherical model, any point,  which is different from  the north pole $\Np$  and south pole $\Sp$ of  Bressloff and Cowan's sphere   $H_{BC} = S^2$ (or, more precisely, each point of an open and dense subset $\cU \subset H_{BC} \setminus \{\Np, \Sp\}$),  is  assumed to be the  
RF of exactly  one  simple neuron, while   the  north pole $\Np$  and the south pole $\Sp$ are assumed to be  the RFs of two pinwheels, the only two columns that Bressloff and Cowan assume  to be  contained in  a  hypercolumn;  
\item  According to  the conformal spherical model,  any point $z \in S^2$ --  with no distinction --   is assumed to be the  common RF (or  the union of  several  RFs, but all of them very close each other) of  a system of   simple neurons,  each of them distinguished from the other  by $4$ internal parameters. Assuming that one of  the internal parameters is the orientation $\theta$, the system  of  simple neurons with a given  RF $z$ is  organised into a collection  of  several  sub-systems,  one per each  value of the internal parameter $\theta$,  and containing  simple cells that are parameterised by the remaining $3$ internal parameters. 
\end{itemize}
 In the conformal spherical model,  the north and south poles  $\Np, \Sp \in S^2$ are still the common RFs of  many neurons, possibly organised in  regular columns or  pinwheels (as it occurs in the Riemannian spherical model), but  -- in contrast with the Riemannian spherical model -- the  poles $\Np$ and $\Sp$ are no longer the only points of the sphere with such a property. In fact, according to  the conformal spherical model {\it all points of $S^2$ are assumed to the  common RFs  of a large family of simple neurons}.\par
 \smallskip
As we mentioned above,   the  conformal  spherical  model  naturally leads to a second  model for    small neighbourhoods of  the poles of $S^2$, called {\it reduced model}, 
  which can be considered as    a  local version  of   Sarti, Citti and Petitot's symplectic model. Such a reduced model  offers  a   physiological interpretation  for   the  scaling parameter $\s$   in terms of   the  normalised  logarithm of the  spatial frequency, which is a purely local property of the input function.\par
  \smallskip
As we showed in \S \ref{Gaussdecomp}, there is an open and dense subset $\cU \subset  \SL(2, \bC)$ which admits  a Gauss decomposition of the form 
$\cU = N^- {\cdot} \bC^* {\cdot} N^+$, where $N^\pm$ and $\bC^*$ are the subgroups of $\SL(2 \bC)$ defined in \eqref{gd}.
   On the other hand,    according  to Lemma \ref{Lemma} and the subsequent remark,  the  $1$-dimensional complex subgroup
   $N^+ $   acts  on  the  tangent  space $T_{\Sp} S^2$ 
    as  the  group of  parallel translations while the local actions  of  the elements  in $N^{-}$  are very close to the identity map near $\Sp$.\par
    \smallskip
   Let us now pick a  small neighbourhood $\cV \subset T_{\Sp} S^2$ of the origin  of   the tangent  space  $T_{\Sp} S^2$   and denote by 
     $\cV_{\Sp} \subset  S^2$  the     neighbourhood of the south pole  $\Sp$ in $S^2$,  which corresponds to $\cV_{\cS}$ via the stereographic projection $\ster_{\bN}: S^2 \setminus\{\Np\} \to T_{\Sp} S^2$. Since the map $\ster_{\bN}$ is  $N^+$  equivariant,   the  open   subset  $\cU^{(N^+)} \subset  N^+$  of  the  elements  in $N^+$ that transform the origin $0 \in T_{\Sp} S^2$  into the other points of  $\cV \subset T_S S^2$, is such that 
     $$ \cU^{(N^+)} \cdot \Sp =  \cV_{\Sp}\ .$$
   Ignoring    the      closed to identity  local action    of the   transformations in  $N^-$ (i.e.  focusing just on the group  $ \bC^* {\cdot} N^+$ and neglecting the contribution of $N^+$
    $  \cU = N^-{\cdot} \bC^* {\cdot} N^+$  of $\SL(2 \bC)$, ), we get 
    that the  simple neurons  corresponding to the fibers over  $\cV_\Sp \subset S^2$ are represented  just by    the elements of  $\bC^* \times  \cU^{(N_+)} \subset \bC^* {\cdot} N^+ $. 
 Such simple neurons  occur in  pairs    $(n^{+}(g), n^-(g))$,     associated with 
  the  element $g \in \bC^* {\cdot} \cU^{(N^+)}$,       and work as pairs of Gabor filters with   RPs, which are the real and imaginary parts  of the complex  Gabor   filters, with  RP are  
     obtained  from the mother complex   RP $\gamma_0^{\bC}$  by the    transformations  $g \in\bC^* \times  \cU^{(N_+)} $.\par
     \smallskip
     This remark lead to the {\it reduced model}    which consists precisely   of  the  following assumption:  {\it  the simple neurons of a hypercolumn  corresponding to the points of a (sufficiently small) neighbourhood $\cV_{\Sp} \subset S^2$  of the south pole $\Sp$ of  $H_{BC} = S^2$  are parameterised by an open subset of the form $\bC^* {\cdot}  \cU^{(N+)} \subset \bC^* {\cdot} N^+$  of the group $\SL(2,\bC)$.}
\par
\bigskip
Notice that  the  reduced model  can be considered as  a   localisation of the  Sarti, Citti and Petitot's symplectic model, in which the
translation  group   $ \bR^2 \subset \Sim(\bR^2)$ is replaced by a  sufficiently small open subset  $\cV_\Sp \subset N^+$ of $N^+ \subset \SL(2, \bC) $.  
The  replacement of the non-compact manifold  $\bR^2$ with a small  subset $ \cU^{(N_+)} $ of $N^+ \simeq \bR^2$ make the symplectic model  more realistic, even if     there is still an unrealistic assumption  that the neurons 
corresponding to  a  fiber  $\pi^{-1}(z) \simeq \bC^*$  are parametrised  by the points of  the non-compact
    group  $\bC^*$.   In order to reach an even  more realistic model,   it would be  natural    to   ``localise''  not only the base manifold but also the     fibers  of the bundle (i.e. 
 to    replace     the standard   fiber    $\bC^*$  by    a compact  neighbourhood of  the identity). An improvement in this direction is
discussed in   \cite{A-S}, to which we refer for the details.\par
\medskip

According to the  reduced model,   the neurons associated with the points an appropriate portion $\cV_\Sp$ of the Bresloff and Cowan's  sphere $H_{BC} = S^2$
are  parameterised by the   coordinates for the $4$-dimensional  space $ \bC^* \cdot \cU^{(N+)} \subset \bC^* {\cdot} N^+$:   two real parameters, given by modulus and the argument  of a complex coordinate of  
the  open subset   $ \cU^{(N^+)} $ of the group  $  N^+ (\simeq \bC)$  and two  real parameters, given by the modulus  and the argument   of a complex coordinate   
    for the group $\bC^*$. In \cite{A-S} it is pointed out that the two real  parameters for the subset  $\cU^{(N^+)}\subset N^+$, say $\wt \theta$ and $\wt \f$,  can be used in two different ways:
    \begin{itemize}
    \item[(a)] to parameterise the projections of the points onto the base of the bundle $\pi:  \bC^* \cdot \cU^{(N^+)} \to \cV_S \subset S^2$ or
    \item[(b)]  to parameterise the points of the fiber of the bundle $\pi:  \bC^* \cdot \cU^{(N^+)} \to \cV_S \subset S^2$. 
    \end{itemize}
    If   $\wt \theta$ and $\wt \f$ are used as in (a), they are 
``external parameters'' for the considered collection of  simple neurons, i.e.  they parameterise the RFs in the sphere $H_{BC} = S^2$ and they can be identified with  the {\it orientation}   and the  {\it normalised logarithm of  the  spatial
frequency} of the neuron, as in Bressloff and Cowan's Riemannian spherical model. On the other hand,  if $\wt \theta$ and $\wt \f$ are used as in (b), they are ``internal parameters'' for the fiber bundle and they can be identified with
the internal parameters  $\theta$ and $\s$ of   Sarti, Citti and Petitot's  symplectic model.  A comparison between  these  two interpretations for  the parameters $\wt \theta$ and $\wt \f$ immediately yields   to {\it a  physiologically interpretation of the  scaling parameter $\sigma$ as the Bressloff and Cowan's normalised logarithm of  the  spatial
frequency of the neurons}.
\par
\smallskip
In \cite{A-S} it is also  remarked  that  an analog of the above  reduced model  may be  constructed also for the neurons corresponding to the points of a    neighbourhood  $\cV_{\Np}$  of the north pole or of any   other   point  $z \in H_{BC} = S^2$. It is also noticed that 
the neurons with RF given by the north pole $\Np$ (resp. south pole $\Sp$) are  characterised by  having the maximal (resp. minimal)  value  for  the   spatial frequency. It is  known  that
   the   processing     of high-frequency and low-frequency information differs from each other  \cite{E-P-G-K-S-K}  .\par
   \smallskip
These observations   suggest the possibility of the  following  new speculative  model  for a hypercolumn.  As in Bresloff and Cowan's model,  assume that the 
RF of the  neurons of a hypercolumn $H$ constitute  a retinal domain $R_H \subset R$, which is diffeomorphic to  (an open and dense region of) Bressloff and Cowan's sphere $H_{BC} = S^2$ 
and are therefore parameterised by  the  spherical coordinates   $ \theta, \phi$ with the above described interpretations in terms of orientation and spatial frequency. 
Let us  now split   the  retinal     domain   $R_H$ into  two   subdomain   $R_H  = \cV_\Sp \cup \cV_\Np$, corresponding to    the   southern   hemisphere (centred at the south pole  $\Sp$) and the  northern  hemisphere (centred at the north pole $\Np$).     Then let us   consider
{\it two    copies}
 of the   reduces  conformal model,   one   for the system  of neurons  associated with $\cV_\Sp$   and  another  for the system of neurons   associated   with     $\cV_\Np$.
 We conjecture  that the   first  (which identifies an appropriate  system of neurons with a bundle  of the form
$   \pi_\Sp :  P_\Sp = \bC^* {\cdot} \cU^{(N^+)}  \to  \cV_\Sp$ for some   $\cU^{(N^+)} \subset  N^+$) can be used as model for  a system of neurons 
 that processes  low spatial frequency visual information, while the second (concerning a bundle  $   \pi_\Np :  P_\Np = \bC^* {\cdot} \cU^{(N^-)}  \to  \cV_\Np$ for some   $\cU^{(N^-)} \subset  N^-$)
can be associated with a system of neurons 
 that processes  high spatial frequency visual information.
\par
\bigskip

\subsection{The  Principle of Invariance and an  application of   the  reduced  model}\label{subsection
8.4.}

   Let  $K$  be  a group  of transformations of  a space  $V$  and denote by  $ \mathcal{O} = Kx$   an orbit of one point $x$ of  $V$.
  The following  obvious $K$-invariance principle holds. \\[10pt]
 {\bf Principle of  invariance.} {\it   If  $K$  is  a group  of transformations of  a space  $V$, 
      the  information,  which  a system of uniformly  distributed observers  on   an orbit $\mathcal{O} = K x \subset V$  
  sends to a
  common  center,
   is  invariant  with respect to  the transformations of  the  group   $K$.}\\[10pt] 
This principle can be used to explain the main difference between the firing  of simple and complex neurons  \cite{H,H-H,Car}.
As we recalled in \S \ref{sec:1}, n. 7,
the firing of a complex cell is
 invariant with respect to the shifts of stimuli inside of its receptive field, in contrast with
a simple cell. This can be  explained by assuming that  the RP of all the  simple cells connected to the complex cell are  derived from a  {\it  mother} RP
via transformations of    the  group $K = \bR^2$ of  translations. By the above Principle of Invariance, the
information, which is obtained by  the complex cell,  must be shift invariant.
We recall that, according to  the  reduced   model,  the simple neurons of a  small neighbourhood of   a  pinwheel   are assumed to work as 
 Gabor   filters,  which are obtained  from  the mother Gabor filter by   transformations of the  similarity
group  $\Sim(\bR^2) = \CO(2) \cdot \bR^2$. This  group contains the normal subgroup  $\bR^2$ of translation. By the above 
Principle of Invariance, the information, which  a  complex  cell  receives   from such a  system of  simple cells,  will be
translation invariant.
\par
\bigskip

\section*{Declarations}
D.A. was  supported by the Grant  Basis-Foundation Leader n  22-7-1-34-1. Besides this, no other funds, grants, or
support was received.
\par
\bigskip


\begin{thebibliography}{71}
 \bibitem{A-A}E.\ Ahissar, A.\ Arieli, Figuring Space by Time, Neuron {\bf 32} (2001), 185--201.
\bibitem{A} D.\ V.\ Alekseevsky,
 Conformal model of hypercolumns in {V}1 cortex and the
              {M}\"{o}bius group. {A}pplication to the visual stability problem, in
              ``Geometric science of information'', Lecture Notes in Comput. Sci., vol.  {\bf 12829}, pp. 65--72,   Springer, 2021.
\bibitem{A1} D.\ V.\
              Alekseevsky,  Microsaccades, Drifts, Hopf Bundle and Neurogeometry,  J. Imaging {\bf 8} (3) (2022), 76.
 \bibitem{A-S} D.\ V.\  Alekseevsky and A. Spiro, 
        Conformal models for   hypercolumns  in the primary visual cortex V1,   {\it in preparation}. 
 \bibitem{A-M} R.\ G.\ Alexander,    S. Martinez-Conde, Fixational Eye Movements  in    ``Eye Movement Research'', pp. 73--115, Springer, 2019.
\bibitem{A-G} M. A.  Akivis,  V. V. Goldberg, 
 {Conformal differential geometry and its generalizations},
    John Wiley \& Sons, Inc., New York,
1996. 
\bibitem{BDL} G.\  Blohm, P.\ Dave, P. Lefevre, 
Transforming retinal velocity into 3D motor coordinates for pursuit eye movements, 
 In: J.\ Vander Sloten, P.\  Verdonck, M.\  Nyssen, J\ . Haueisen, (eds) 4th European Conference of the International Federation for Medical and Biological Engineering. IFMBE Proceedings, vol 22. Springer, Berlin, Heidelberg, 2009. 
 \bibitem{B-G} T.\ Bonh\"offer, A.\ Grinvald, Iso-orientation domains in cat visual
cortex are arranged in pinwheel-like patterns, Nature {\bf 353} (1991)
429--431.
 \bibitem{B-S-C}
E.\ Baspinar, A.\ Sarti, G.\ Citti, A sub-Riemannian model of the visual cortex with frequency and phase,
J. Math. Neurosci. 10 (1) (2020), art. n. 11
 \bibitem{B-Z} O.\ Ben-Shahar, S.\ W.\ Zucker,  Sensitivity to curvatures in orientation-based texture segmentation. Vision Research, {\bf 44} (2004), 257--277.
\bibitem{B-Z-S-F}  W.\ Bosking, Y.\ Zhang, B.\ Schofield, D.\ Fitzpatrick, Orientation selectivity and the arrangement of horizontal connections
in tree shrew striate cortex, J. Neurosci. {\bf 17} (1997), 2112--2127.
 \bibitem{B-C2} P.\ C.\ Bressloff, J.\ D.\ Cowan, The visual cortex as a crystal,  Physica D {\bf 173} (2002),  226--258.
\bibitem{B-C} P.\ C.\ Bressloff, J.\ D.\ Cowan, A  spherical  model  for  orientation   as
 spatial-frequency  tuning in   a  cortical  hypercolumn, Philos. Trans. R. Soc. Lond. B Biol .Sci. {\bf 358} (1438) (2003), 1643--1667.
\bibitem{B-C1} P.\ C.\ Bressloff, J.\ D.\ Cowan, The functional geometry of local and horizontal connections in a model of V1,
J. Physiol. Paris {\bf 97} (2003), 221--36.
\bibitem{B-T-T} A.\ J.\ Bron, R.\ C.\ Tripathi, B.\ J.\  Thipathi, Wolff's Anatomy of the Eye and Orbit,  Hodder Education Publishers, 1997.
\bibitem{C-R} F.\ W.\ Campbell, G.\ Robson, Application of  Fourier analysis  to the visibility of  gratings, J. Physiol. {\bf 197} (1968), 551--566.
\bibitem{Car} M.\ Carandini, What simple and complex cells compute? J. Physiol.  {\bf 577} (2006), 463--466.
\bibitem{Car1} M.\ Carandini, J.\ B.\ Demb, V.\ Mante, D.\ J.\ Tolhurst, Y.\ Dan, B.\ A.\ Olshausen,
J.\ L.\ Gallant, N.\ C.\ Rust, Do We Know What the Early Visual System Does?,  J. Neurosci.  {\bf 25} (46) (2005), 10577--10597.
\bibitem {C-M-W}  J.\ Cavanaugh, K.\ McAlonan, R.\ H.\ Wurtz,
Organization of Corollary Discharge Neurons in Monkey Medial Dorsal Thalamus,
J.  Neurosci.  {\bf  40} (33) (2020), 6367--6378.
\bibitem{C-S0}  G.\ Citti, A.\ Sarti, A cortical based model of perceptual completion in the rototranslation space, J.M.I.V. {\bf 24},  (2006), 307--326. 
\bibitem {C-S}  G.\ Citti, A.\ Sarti ed.,  Neuromathematics of Vision,  Lecture Notes in Morphogenesis, Springer, Heidelberg, 2014.
\bibitem{Co} C.\ E.\ Connor, Shifting Receptive Fields, Neuron {\bf 29}  (2001), 548--549.
\bibitem{C-R-G-S} M.\ C.\  Crair, E.\ S.\  Ruthazer, D.\ C.\  Gillepsie, M.\ P.\  Stryker, Ocular dominance peaks at pinwheels center singularities of the orientation map in cat visual cortex, J. Neurophysiol. {\bf 77} (1997), 3381-- 3385.
\bibitem{D} J.\ G.\ Daugman, Uncertainty--relation for resolution in space spatial frequency and orientation optimized by two dimensional visual cortical filters, J. Opt. Soc. Amer. {\bf 2} (1985), 1160 -- 1169.
\bibitem{D-C-G}  J.\ R.\ Duhamel,  C.\ L.\ Colby, M.\ E.\ Goldberg, The updating of the representation of visual space in parietal cortex by intended eye movements,   Science {\bf 255} (5040) (1992), 90--92.
\bibitem{E-P-G-K-S-K} R.\ M.\ Everson, A.\ K.\ Prashanth,  M.\ Gabbay, B.\ W.\ Knight, L.\ Sirovich,  E.\. Kaplan,
 Representation of spatial frequency and orientation in the
visual cortex, PNAS ,  95 (14) (1998), 8334--8338.
\bibitem{F}  A.\ Fialkow, The conformal theory of curves, Trans. Amer. Math. Soc.  {\bf 51} (1942),  435--501.
\bibitem{F-R-K-V} L.\ M. J.  Florack, B.\ M. ter\ Haar Romeny, J.\ J. \ Koenderink, M.\ A.\ Viergever, Scale and the differential structure of images, Image and Vision Computing,  {\bf 10} (6) (1992), 376--388.
\bibitem{F-M-C-S}  B.\ Franceschiello, A.\ Mashtakov, G.\ Citti, A.\ Sarti, Geometrical optical illusion via sub-riemannian geodesics in the roto-translation group, Diff. Geom. Appl. {\bf 65} (2019), 55--77.
\bibitem{G} E.\ H.\ Gombrich,  The Sense of Order. A Study in the Psychology of Decorative Art,  Phaidon Press Ltd, 1979.
\bibitem{H-H} M.\ Hansard, R.\ Heraud, Differential Model of  complex cells,  Neural Comput. {\bf 23} (9) (2011), 1--24.
\bibitem{He} H.\ von Helmholtz, Handbuch der Physiologischen Optik, Voss, Leipzig, 1867.
 \bibitem{Hof}    W.\ C.\ Hoffman,  The  visual  cortex is  a  contact bundle,  Appl. Math. Comput. {\bf 32} (1989),  137--167.
\bibitem{H} D.\ H.\ Hubel, Eye, brain,  and vision, W. H. Freeman \& Co., New York, 1988.
\bibitem{Hu} G.\ Hung, Oculomotor Control, in Wiley Encyclopedia of Biomedical Engineering,  2019, pages 1--11.
\bibitem{J-G-H-W} M.\ Jin , X.\ Gu , Y.\ He , Y.\ Wang,  Conformal Geometry:
Computational Algorithms and Engineering Applications, Springer2018.
 \bibitem{Ka} G.\ Kanizsa, Organization in Vision: Essays on Gestalt Perception, Praeger Publishers Inc., 1979.
\bibitem{K-B} E.\ Kaplan, E.\ Benardete,   The dynamics  of primate  retinal  ganglion cells,   Progress in brain research  {\bf 134} (2001), 17--34.
\bibitem{K-I-G-W} M.\ K.\ Kapadia, M. Ito, C.D. Gilbert,  G. Westheimer,
Improvement in visual sensitivity by changes in local context: Parallel studies in human observers and in V1 of alert monkeys, Neuron, {\bf 15} (1995),   843--856.
\bibitem{Ko} S.\ Kobayashi, Transformation groups in differential geometry, Springer-Verlag, New York-Heidelberg,1972.
\bibitem {K}  J.\ J.\ Koenderink,    The structure of images, Biological cybernetics {\bf 50} (5) (1984),  363--370.
\bibitem{Ku} S.\ W.\ Kuffler, Discharge patterns and functional organization of mammalian retina,  J. Neurophysiol. {\bf 16} /1953),  37--68. [
\bibitem{L-M-R}  M.\ Lovric, M.\ Min-Oo, E.\ A.\ Ruh, Multivariate normal distributions parametrized as a Riemannian symmetric space,
J.  Multivariate  Anal.  {\bf 74} (2000),  36--48.
\bibitem{L-K} V. \ Lychagin, N.\ Konovenko, Invariants for primary visual cortex, Differential
Geom. Appl. {\bf 60} (2018), 156--173.
\bibitem{Marr} D.\ Marr, Vision: A computational Investigation into the human representation and processing of visual information, W. H. Freeman and Company,   San Francisco, 1982.
\bibitem{M-C} D.\ Melcher,  C.\ L.\ Colby, Trans-saccadic perception, Trends in Cognitive  Sciences  {\bf 12} (12) (2008),466--73.
\bibitem{M-B} K.\ Mirpour, J.\ W.\ Bisley, Anticipatory remapping of attentional priority across the entire visual field,  J. Neurosci.  {\bf 32} (46) (2012), 16449--16457.
\bibitem{M-T-T} J.\ A.\ Movshon, I.\ D.\ Thompson, D.\ J.\ Tolhurst,    Spatial summation in the  receptive fields of complex cells in the cat's striate cortex,  J. Physiol. {\bf 283}  (1978), 78–100.
\bibitem{P-M} P.\ Perona, J. Malik, Scale-space and edge detection using anisotropic
diffusion, IEEE Transactions on Pattern Analysis and Machine Intelligence, {\bf 12} (1990), 629--639. 
\bibitem {P} J.\ Petitot, The neurogeometry of  pinwheels  as a sub-Riemannian
contact   structure, J. Physiol. Paris {\bf 97} (2003),  265--309.
\bibitem{P1} J.\ Petitot,   Elements  of  neurogeometry. Functional Architectures of Vision (Translated from the 2008 French edition), Springer, Cham, 2017.
\bibitem{P-T} J.\ Petitot,  Y. Tondut,
  Vers une neurog\'{e}om\'{e}trie. {F}ibrations corticales, structures
              de contact et contours subjectifs modaux,
 {Math. Inform. Sci. Humaines} {\bf 145} (1999), 5--101.
     \bibitem{P-S} F.\ Podest\`a,  A.\ Spiro, 
 Introduzione ai Gruppi di Trasformazione,  Preprint Series of the Department V. Volterra of the University of Ancona, Ancona,  1996.
\bibitem{P-R} M.\ Poletti, M.\ Rucci, A compact field guide to the study of microsaccades: Challenges and functions,  Vision research {\bf 118} (2016),  83--97.
\bibitem{P-B-B} 
M.\ Ptito,  M.\ Bleau,  J.\  Bouskila,  The Retina: A Window into the Brain,   Cells {\bf 10} (2021),  3269 -- 11pp.
\bibitem{R-B} B.\ Rogers,  K.\ Brecher, Straight Lines, ``Uncurved Lines'', and Helmholtz’s ``Great Circles on the Celestial Sphere''. Perception  {\bf 36} (2007),  1275--1289. 
\bibitem{R-M-B} J.\ Ross, M.\ C.\ Morrone, D.\ C.\ Burr, Compression of visual space before saccades, Nature {\bf 386} (1998),  598--601.
\bibitem{R-P} M.\ Rucci, M.\ Poletti, Control and Functions of Fixational Eye Movements, Annu. Rev. Vis. Sci. {\bf Nov. 1} (2015), 499--518.
\bibitem{S-N-R} M.\ B.\ Sachs, J.\ Nachmias, J.\ G.\ Robson, Spatial-frequency channels in human vision,  J. Opt. Soc. Am. {\bf 61} (1971),1176--1186.
\bibitem{So} S.\ G.\ Salomon,  Chapter 3 - Retinal ganglion cells and the magnocellular, parvocellular, and koniocellular subcortical visual pathways from the eye to the brain, 
Handbook of Clinical Neurology {\bf 178} (2021), 31--50. 
\bibitem{S-C}  A.\ Sarti, G.\ Citti,  On the origin and nature of neurogeometry, preprint downloadable from   http://www.dm.unibo.it/~citti/curri/neurogeometry.pdf.
 \bibitem{S-C-M}  A.\ Sarti, G.\ Citti, M.\  Manfredini, From neural oscillations to variational problems in the visual cortex, J. of Physiol. Paris {\bf 97} (2003), 379--385. 
\bibitem{S-C-P}  A.\ Sarti, G.\ Citti, J.\ Petitot,  The  symplectic  structure of the primary   visual cortex, Biol. Cybern. {\bf 98} (2008),  33--48.
\bibitem{Sch} E.\ L.\  Schwartz, Topographic Mapping in Primate Visual Cortex: History, Anatomy, and Computation,  in  ``Visual Science and Engineering: models and applications'' (1994),  293-360.
\bibitem{St} S. Sternberg,  Lectures on differential geometry,
  Chelsea Publishing Co., New York,
1983.
\bibitem{Su} R.\ Sulanke, Submanifolds of the M\"obius space, II. Frenet formulas and curves of constant curvatures, Math. Nachr. {\bf 100} (1981), 235--247.
\bibitem{Sz} R.\ Szeliski,  Computer Vision: Algorithms and Applications, Springer Nature Switzerland AG, 2022.
\bibitem{S} N.\ V.\ Swindale, How many maps  are in  visual  cortex, Cerebral Cortex  {\bf 10} (7) (2000), 634--643.
\bibitem{S1} N.\ V.\ Swindale, How different feature spaces may be represented in cortical maps, Network. 2004 Nov;15(4):217-42;  Erratum in: Network. 2005 Mar;16(1):115. 
\bibitem{T-Z-B}   D.\ Y\  Ts'o, M.\ Zarella,  G\. Burkitt,  Whither the hypercolumn? J Physiol. {\bf 587}  (2009),   2791--2805.
\bibitem {W-W}  B.\ A.\ Wolfe, D.\ Whitney, Saccadic remapping of object-selective information, Atten. Percept. Psychophys. {\bf 77} (2015),  2260 -- 2269.
\bibitem {W} R.\ H.\ Wurtz,   Neuronal mechanisms of  visual  stability, Vision Research  {\bf 48}   (2008), 2070--2089.
 \bibitem {W-J-B}  R.\ H.\ Wurtz, W.\ M.\ Joiner,  R.\ A.\ Berman,  Neuronal mechanisms for visual stability: progress and problems, Philos. Trans. R. Soc. B  {\bf 366}  (2011),  492--503.
\bibitem{Y} A.\ L.\ Yarbys, Eye Movements  and  Vision, Plenum Press, New York, 1967.
\bibitem{Z-M}  M.\ Zirnsak, T.\ Moore,
Saccades and shifting receptive fields: anticipating consequences or selecting targets?,  Trends Cogn. Sci.  {\bf 18} (12) (2014),  621--628.
%
\end{thebibliography}
 \end{document}